\newcommand{\openone}{\leavevmode\hbox{\small1\normalsize\kern-.33em1}}
\def\UrlSpecials{\do\~{\kern -.15em\lower .7ex\hbox{~}\kern .04em}} \catcode`~=13 
\newcommand{\nn}{\nonumber}
\newcommand{\calX}{\mathcal{X}}
\newcommand{\ba}{\mathbf{a}}
\newcommand{\bA}{\mathbf{A}}
\newcommand{\bg}{\mathbf{g}}
\newcommand{\bh}{\mathbf{h}}
\newcommand{\bI}{\mathbf{I}}
\newcommand{\bL}{\mathbf{L}}
\newcommand{\bn}{\mathbf{n}}
\newcommand{\bv}{\mathbf{v}}
\newcommand{\bw}{\mathbf{w}}
\newcommand{\bW}{\mathbf{W}}
\newcommand{\bx}{\mathbf{x}}
\newcommand{\by}{\mathbf{y}}
\newcommand{\bz}{\mathbf{z}}
\newcommand{\rma}{\mathrm{a}}
\newcommand{\rmb}{\mathrm{b}}
\newcommand{\rmc}{\mathrm{c}}
\newcommand{\rmd}{\mathrm{d}}
\newcommand{\rme}{\mathrm{e}}
\newcommand{\rmg}{\mathrm{g}}
\newcommand{\rmH}{\mathrm{H}}
\newcommand{\rmm}{\mathrm{m}}
\newcommand{\rmP}{\mathrm{P}}
\newcommand{\rms}{\mathrm{s}}
\newcommand{\rmt}{\mathrm{t}}
\newcommand{\rmw}{\mathrm{w}}
\newcommand{\bbN}{\mathbb{N}}
\DeclareMathAlphabet{\mathbsf}{OT1}{cmss}{bx}{n}
\DeclareMathAlphabet{\mathssf}{OT1}{cmss}{m}{sl}
\DeclareSymbolFont{bsfletters}{OT1}{cmss}{bx}{n}  
\DeclareSymbolFont{ssfletters}{OT1}{cmss}{m}{n}
\DeclareMathSymbol{\bsfGamma}{0}{bsfletters}{'000}
\DeclareMathSymbol{\ssfGamma}{0}{ssfletters}{'000}
\DeclareMathSymbol{\bsfDelta}{0}{bsfletters}{'001}
\DeclareMathSymbol{\ssfDelta}{0}{ssfletters}{'001}
\DeclareMathSymbol{\bsfTheta}{0}{bsfletters}{'002}
\DeclareMathSymbol{\ssfTheta}{0}{ssfletters}{'002}
\DeclareMathSymbol{\bsfLambda}{0}{bsfletters}{'003}
\DeclareMathSymbol{\ssfLambda}{0}{ssfletters}{'003}
\DeclareMathSymbol{\bsfXi}{0}{bsfletters}{'004}
\DeclareMathSymbol{\ssfXi}{0}{ssfletters}{'004}
\DeclareMathSymbol{\bsfPi}{0}{bsfletters}{'005}
\DeclareMathSymbol{\ssfPi}{0}{ssfletters}{'005}
\DeclareMathSymbol{\bsfSigma}{0}{bsfletters}{'006}
\DeclareMathSymbol{\ssfSigma}{0}{ssfletters}{'006}
\DeclareMathSymbol{\bsfUpsilon}{0}{bsfletters}{'007}
\DeclareMathSymbol{\ssfUpsilon}{0}{ssfletters}{'007}
\DeclareMathSymbol{\bsfPhi}{0}{bsfletters}{'010}
\DeclareMathSymbol{\ssfPhi}{0}{ssfletters}{'010}
\DeclareMathSymbol{\bsfPsi}{0}{bsfletters}{'011}
\DeclareMathSymbol{\ssfPsi}{0}{ssfletters}{'011}
\DeclareMathSymbol{\bsfOmega}{0}{bsfletters}{'012}
\DeclareMathSymbol{\ssfOmega}{0}{ssfletters}{'012}
\newcommand{\bmu}{\bm{\mu}}
\newtheorem{theorem}{Theorem} 
\newtheorem{lemma}[theorem]{Lemma}
\newtheorem{thm}{\bf Theorem}
\begin{document}

\title{Collaborative Secret and Covert Communications for  Multi-User Multi-Antenna Uplink UAV Systems: Design and Optimization}
    \author{Jinpeng Xu, Lin Bai, Xin Xie, and Lin Zhou
    \thanks{Jinpeng Xu, Lin Bai, and Lin Zhou are with the School of Cyber Science and Technology, Beihang University, Beijing, China, 100191. E-mails: \{xujinpeng, l.bai, lzhou\}@buaa.edu.cn \textit{(Corresponding author: Lin Zhou).}}
    \thanks{Xin Xie is with the School of Electronics and Information Engineering, Beihang
University, Beijing, China, 100191. E-mail: xiexincqbz@buaa.edu.cn.}
\thanks{This paper was partially presented at the 2024 IEEE ICCC workshop~\cite{xu2024secret}.}
}

\maketitle
\begin{abstract}
Motivated by diverse secure requirements of multi-user in UAV systems, we propose a collaborative secret and covert transmission method for multi-antenna ground users to unmanned aerial vehicle (UAV) communications. Specifically, based on the power domain non-orthogonal multiple access (NOMA), two ground users with distinct security requirements, named Bob and Carlo, superimpose their signals and transmit the combined signal to the UAV named Alice. An adversary Willie attempts to simultaneously eavesdrop Bob's confidential message and detect whether Carlo is transmitting or not. We derive close-form expressions of the secrecy connection probability (SCP) and the covert connection probability (CCP) to evaluate the link reliability for wiretap and covert transmissions, respectively. Furthermore, we bound the secrecy outage probability (SOP) from Bob to Alice and the detection error probability (DEP) of Willie to evaluate the link security for wiretap and covert transmissions, respectively. To characterize the theoretical benchmark of the above model, we formulate a weighted multi-objective optimization problem to maximize the average of secret and covert transmission rates subject to constraints SOP, DEP, the beamformers of Bob and Carlo, and UAV trajectory parameters. To solve the optimization problem, we propose an iterative optimization algorithm using successive convex approximation and block coordinate descent (SCA-BCD) methods. Our results reveal the influence of design parameters of the system on the wiretap and covert rates, analytically and numerically. In summary, our study fills the gaps in collaborative secret and covert transmission for multi-user multi-antenna uplink UAV communications and provides insights to construct such systems.
\end{abstract}

\begin{IEEEkeywords}
Physical layer security, wiretap channel, beamforming, trajectory optimization, Pareto optimization.
\end{IEEEkeywords}

\section{Introduction}
Unmanned aerial vehicle (UAV) communication has attracted significant interest in recent years, primarily due to its vital role in constructing space-air-ground integrated networks (SAGIN)~\cite{mozaffari2019tutorial}. {However, UAVs typically have a certain flight altitude, which leads to the absence of physical barriers in UAV communications. Consequently, UAV communications often occur with a high probability of line-of-sight (LoS) transmission, and stronger LoS signals make confidential communications more susceptible to detection and interception by adversaries~\cite{XJP_Model}. Therefore, compared with land-based wireless communications, the high probability of LoS signal transmissions in UAV communications leads to severe security risks~\cite{adil2022systematic}, where the security risk includes both the confidentiality of communication content and the protection of communication behavior.} Specifically, the above characteristic presents a challenge for UAV secure communications to satisfy robust security communication requirements of next-generation networks~\cite{wei2022toward}. However, traditional encryption-based algorithms are vulnerable when adversaries have unlimited computational ability \cite{Survey_security_2016}. Fortunately, Shannon's pioneering theory of physical layer security (PLS) provides methods to achieve perfect security in the above case~\cite{Shannon1949}. Thus, PLS, along with traditional encryption technologies, is indispensable for establishing UAV secure wireless communication networks.

In the past few decades, to ensure communication security in PLS, extensive works~\cite{chen2016survey} concentrated on the wiretap channel theorem pioneered by Wyner~\cite{wyner1975wire}. Wiretap channels enable secure communications by using the characteristics of wireless channels, e.g., fading, noise, and interference~\cite{csiszar1978broadcast, trappe2015challenges}. Specifically, several channel coding methods have been proposed to simultaneously guarantee communication security and reliability, e.g., Wyner codes~\cite{wyner1975wire}, Polar codes~\cite{mahdavifar2011achieving}, and low-density parity check codes (LDPC)~\cite{klinc2011ldpc}. {The codes for wiretap channels safeguard the content of sensitive messages from being eavesdropped by malicious adversaries, which is insufficient for certain security-sensitive command and control systems where successful detection of communication occurrence between legitimate users is highly undesirable.}

In order to protect the communication occurrences of legitimate users from being detected, a promising new technology in PLS known as covert communication, or the low probability of detection communication, has been proposed~\cite{bash2013limits}. The seminal papers by Bash \emph{et al.}~\cite{bash2013limits}, Wang \emph{et al.}~\cite{wang2016fundamental}, and Bloch~\cite{bloch2016covert} demonstrated the now well-known square root law for covert communication over the additive white Gaussian noise channel, the discrete memoryless channel, and the binary symmetric channels. In particular, it was shown that $O(\sqrt{n})$ bits can be covertly and reliably transmitted over the above channels when $n$ channel uses are used. In order to achieve covert communication, several studies proposed low-complexity covert communication coding schemes~\cite{zhang2019covert,kadampot2020multilevel}.  These approaches aim to enhance the practicality and efficiency of covert communication systems, ensuring that the transmitted signals remain undetectable by adversaries. Different from the wiretap channel studies, covert communication safeguards the communication behaviors of legitimate users from being detected by any adversary~\cite{chen2023covert}. Thus, covert communication provides additional degrees of freedom for robust PLS~\cite{bai2023covert}.

Both wiretap channel studies and covert communication are fundamentals for PLS. Based on these theories, secrecy and covert communication have been studied comprehensively in various UAV scenarios, e.g.,~\cite{wang2022physical,chen2023covert}. In the remaining part of this section, we recall existing studies on secret and covert UAV communications and clarify our main contributions beyond these studies.

\subsection{Related Works}
The wiretap channel in UAV communications has been extensively investigated~\cite{wu2020energy,sun2019secure,liu2022ris,duo2020energy,li2021secrecy,zhang2019securing,li2021secure,wang2023secrecy}, including both downlink and uplink transmissions with various antenna setups and artificial noise (AN) technology. For UAV downlink secrecy communications, Wu \emph{et al.}~\cite{wu2020energy} proposed a transmission method to enhance the secrecy performance with the impact of UAV jitter, and optimized beamformers for confidential and AN signals to minimize the transmission power. With the directional modulation technology, Sun \emph{et al.}~\cite{sun2019secure} studied the security, reliability, and energy efficiency of downlink millimeter-wave (mmWave) simultaneous wireless information and power transfer UAV networks using non-orthogonal multiple access (NOMA) and orthogonal multiple access schemes. Liu \emph{et al.}~\cite{liu2022ris} studied the reconfigurable intelligent surface-mounted UAV downlink communication and optimized the secrecy rate using the successive convex approximation (SCA) and semidefinite relaxation (SDR) techniques. For UAV uplink secure communications, Duo \emph{et al.}~\cite{duo2020energy} studied a full-duplex secrecy communication scheme and optimized the UAV's trajectory and jamming power to maximize energy efficiency. Li \emph{et al.}~\cite{li2021secrecy} proposed an analogous scheme in UAV wireless sensor networks and maximized energy efficiency by optimizing the transmit power, jamming power, and UAV's trajectory.

In order to design an integrated security scheme for UAV downlink and uplink communications, Zhang \emph{et al.}~\cite{zhang2019securing} optimized the UAV trajectories and the transmit power to maximize secrecy rates for both UAV-to-ground and ground-to-UAV communications in the presence of potential eavesdroppers. Li \emph{et al.}~\cite{li2021secure} studied the secrecy performance of a UAV-to-vehicle communication system, and evaluated the secrecy performance for both downlink and uplink transmission. Wang \emph{et al.}~\cite{wang2023secrecy} maximized the secrecy rate and energy efficiency in UAV-enabled two-way relay systems by optimizing the UAV velocity, trajectory, and power allocation. All the aforementioned studies provided insight and guidance for designing content security requirements for UAV systems. 

To protect the communication behavior of legitimate users, covert UAV communications have been studied extensively~\cite{zhou2021three,yan2021optimal,Zhou_TrajPower,hu2019optimal,zhang2020optimized,jiang2021resource,Su_NOMA_downlink_uav_covert,li2021md,zhou2021uav,zhang2022uav,chen2021uav_tvt,chen2021uav_tcom,wang2019secrecy}. In the scenario with one legitimate receiver, Zhou \emph{et al.}~\cite{zhou2021three} and Yan \emph{et al.}~\cite{yan2021optimal} optimized the placement and the transmit power of the UAV to improve the covert communication performance. Considering the adversary's location and noise uncertainties,  Zhou \emph{et al.}~\cite{Zhou_TrajPower} studied the UAV downlink covert communication and proposed a SCA algorithm to optimize the trajectory and the transmit power of the UAV. In UAV mmWave covert communications, Hu \emph{et al.}~\cite{hu2019optimal} used an antenna array with beam sweeping to detect the UAV transmission, and optimized the detection accuracy. Zhang \emph{et al.}~\cite{zhang2020optimized} proposed a beam-sweeping scheme to optimize the UAV covert rate when the ground transmitter is ignorant of the precise location of the UAV and the ground warden. Since UAV systems typically involve multiple users, Jiang \emph{et al.}~\cite{jiang2021resource} studied the UAV downlink multi-user covert communication, and optimized the UAV's transmit power and trajectory to maximize the covert rate using block coordinate descent (BCD) algorithm. Su \emph{et al.}~\cite{Su_NOMA_downlink_uav_covert} studied NOMA downlink covert communications with multi-users. 

To improve covert performance, AN-aided covert communications are widely explored in UAV multi-user communications. In particular, Li \emph{et al.}~\cite{li2021md} proposed an AN-aided covert transmission scheme cognitive radio networks and obtained a near-optimal covert rate. Zhou \emph{et al.}~\cite{zhou2021uav} and Zhang \emph{et al.}~\cite{zhang2022uav} employed the full-duplex architecture on UAV platforms to transmit AN against the warden's detection. In multiple warden scenarios, Chen \emph{et al.}~\cite{chen2021uav_tvt} used a multi-antenna jamming scheme to maximize the covert rate.  Considering that the UAV acts as a warden in multi-relay scenarios, Chen \emph{et al.}~\cite{chen2021uav_tcom} further proposed a UAV-relayed covert communication scheme with finite blocklength to maximize effectively covert transmission bits. Finally, Wang \emph{et al.}~\cite{wang2019secrecy} optimized the secrecy and covert throughput in multi-hop relay communication to obtain the optimal hops for transmission efficiency. 

In summary, although secret and covert communications have been extensively studied in various UAV communication scenarios, the study of multi-user multi-antenna collaborative secret and covert communications in UAV uplink systems has never been explored. Specifically, most studies on UAV secure communications suffer from one of the following limitations. Firstly, most papers focused on single-antenna scenarios~\cite{sun2019secure,duo2020energy,li2021secrecy,zhang2019securing,li2021secure,wang2023secrecy,zhou2021three,yan2021optimal,Zhou_TrajPower,jiang2021resource,li2021md,zhou2021uav,zhang2022uav,chen2021uav_tcom,wang2019secrecy}, which can not utilize spatial resources and beamforming to improve secure performance. Secondly, all existing studies for UAV secure communications focused on either secret communications using the wiretap channel theory~\cite{wu2020energy,sun2019secure,liu2022ris,duo2020energy,li2021secrecy,zhang2019securing,li2021secure,wang2023secrecy} or covert communications~\cite{zhou2021three,yan2021optimal,Zhou_TrajPower,jiang2021resource,chen2021uav_tvt,li2021md,zhou2021uav,zhang2022uav,chen2021uav_tcom,wang2019secrecy,hu2019optimal,zhang2020optimized}. But a single security measure cannot satisfy diverse security requirements of individual users. Thirdly, although several studies employed the AN technology to improve the secrecy or covert rate~\cite{duo2020energy,li2021secrecy,li2021md,zhou2021uav,zhang2022uav}, the integration of AN technology on UAV platforms is challenging due to its high power consumption and the limited battery power of UAV platforms. Finally, in existing multi-user UAV communications~\cite{li2021secrecy,zhou2021uav,wang2019secrecy}, ground users are assumed to transmit independently. However, it is challenging and insufficient to satisfy secure quality of service (QoS) without the cooperation of multiple users.

Since UAV communication often involves multiple users, ground users typically have various security communication requirements. For instance, some users prioritize content security, while others prioritize communication occurrence security. Although Forouzesh \emph{et al.}~\cite{forouzesh2020joint} studied the secret and covert communications in a downlink ground system. Note that there are many differences between~\cite{forouzesh2020joint} and our study. Specifically, i) Ref.~\cite{forouzesh2020joint} considered a downlink broadcast channel scenario in a ground-based environment, whereas our study focuses on an uplink multi-access channel scenario in UAV communications, ii) Ref.~\cite{forouzesh2020joint} treated eavesdropping and detection as distinct adversarial actions performed by different adversaries, making the theoretical analysis relatively straightforward. In contrast, our study considers eavesdropping and detection as actions carried out by the same adversary, which escalates
the complexity of detection and surveillance. Therefore, the results in~\cite{forouzesh2020joint}  are not suitable for uplink UAV communications since i) UAV communications have different channel conditions from ground communication, ii) the uplink communication strategy is vastly different from downlink, and iii) there is an essential difference when the eavesdropper and the warden are treated as an integration adversary versus a separation adversary. Thus, the study on the collaborative secret and covert communications for uplink UAV systems is still in its infancy, and the existing studies can not satisfy the diverse security requirements of multiple users. Therefore, to satisfy ground users' diverse security requirements, it is indispensable to study collaborative secret and covert communications for UAV systems. Subsequently, the key challenges to be addressed include designing an effective multi-user collaborative secure transmission strategy, balancing the trade-off between secret and covert rates, and optimizing resource allocation to enhance the QoS of ground users.

\subsection{Main Contributions}
To fill the aforementioned research gaps, we propose a NOMA-based multi-user corroborative secret and covert uplink transmission scheme for UAV communications over a quasi-static fading channel. Specifically, we study the secure performance of the proposed scheme and optimize system parameters and the UAV's trajectory to enhance the secure QoS. Our detailed contributions are summarized as follows.

\begin{enumerate}
\item To our best knowledge, this is the first paper that studies the collaborative secret and covert uplink communications for multi-user multi-antenna UAV systems. From this perspective, our results provide new insights and theoretical benchmarks for the design of secure UAV systems where users have diverse security requirements.

\item  We propose a strategy that collaboratively enables secret and covert uplink transmission for different users to the UAV over a quasi-static fading channel using the NOMA technology. Specifically, we derive close form expressions for the secrecy connection probability (SCP) and covert connection probability (CCP) to evaluate secret and covert connection performance, respectively. Furthermore, we derive close form expressions for the secrecy outage probability (SOP) of the secret user and the detection error probability (DEP) of the adversary to evaluate the link security for wiretap and covert users, respectively.

\item To capture the theoretical benchmark of our proposed scheme, we formulate a multi-objective optimization problem to maximize the weighted sum of the average secret rate and average covert rate subject to SOP, DEP, beamformers of two ground users, and the trajectory parameters of the UAV. {To solve the optimization problem, we propose an SCA-BCD algorithm, which provides feasible solutions for both the secret and covert beamformers, and the UAV's trajectory and flight parameters.} 

\item Using the proposed SCA-BCD algorithm, we uncover the trade-off between the secret rate and the covert rate and obtain the near Pareto front of the two optimization objectives. Furthermore, we reveal the influence of design parameters on the weighted sum of average secret rate and average covert rate, analytically and numerically. In particular, we verify that either the secret rate or the covert rate can dominate the sum of the rate under different conditions.
\end{enumerate}

\subsection{Organization for the Rest of the Paper}
The rest of this paper is organized as follows. In Section \ref{sec:model}, we present the system model, specify the channel model, describe the transmission scheme, and formulate optimization problems. In Section \ref{sec:main_results}, we present our main results including closed form expressions for optimization constraints and solutions to our proposed optimization problems. Subsequently, in Section \ref{sec:simulation}, we present numerical examples to illustrate our theoretical findings. Finally, we conclude the paper and discuss future research directions in Section \ref{sec:conclusion}.

\subsection*{Notation}
Random variables and their realizations are denoted by upper case variables (e.g.,  $X$) and lower case variables (e.g.,  $x$), respectively. All sets are denoted in the calligraphic font (e.g.,  $\mathcal{X}$). Let $X^{n}:=(X_1,\ldots,X_n)$ be a random vector of length $n$. Let $\mathcal{CN}(\bf 0, \bf\Sigma )$ denote a circularly symmetric complex Gaussian vector with zero mean and covariance $\bf\Sigma$. Given any integers $[m,n]$ and a $m\times n$ matrix $\bA$, let $\bA^T$, $\bA^H$,  $\mathrm{Tr}(\bA)$, and  $\mathrm{det}(\bA)$ denote its transpose, conjugate transpose, the trace of $\bA$, and the determinant of $\bA$, respectively. Analogously, given any integer $n$ and a length-$n$ vector $\ba$, let $\ba_{\calX}=\{\ba[i]\}_{i\in\calX}$ represents a vector consisting of elements in $\ba$ with indices belonging to a set $\calX\subseteq[1:n]$. 

\section{System Model and Transmission Scheme}
\label{sec:model}

As shown in Fig. \ref{fig:system_model}, two ground legitimate users, Bob and Carlo, aim to transmit signals to the legitimate user Alice, who is a fixed-wing UAV\footnote{Note that rotorcraft UAV communication can be regarded as a special case of fixed-wing UAV communication within a specific time slot. In order to consider a more general UAV communication scenario and provide new insights for future studies on covert UAV communications involving multiple ground covert users, we extend the rotorcraft UAV presented in the conference paper~\cite{xu2024secret} to a fixed-wing UAV in this study.}. Simultaneously, a passive adversary named Willie attempts to wiretap the confidential message from Bob and detect the communication occurrence of Carlo. Note that Bob and Carlo require secrecy and covertness of their messages, respectively, and we assume that Bob's and Carlo's codebooks are different~\cite{liang2008multiple}, with Willie having access to Bob's codebook but not to Carlo's.  Assume that Bob and Carlo are equipped with $N_\rmb$ and $N_\rmc$ antennas for beamforming transmission, respectively. {It is common in practical UAV systems to assume that the UAV is equipped with a single antenna due to size, weight, and power constraints. Consistent with previous studies~\cite{chen2021uav_tcom, chen2021uav_tvt,yan2021optimal}, we assume that both  Alice and Willie have a single antenna. In this setup, Willie does not require knowledge of the beamforming vectors and performs power detection for eavesdropping~\cite{ma2021robust}} 
\begin{figure}[bt]
\centering
\includegraphics[width=0.8\linewidth]{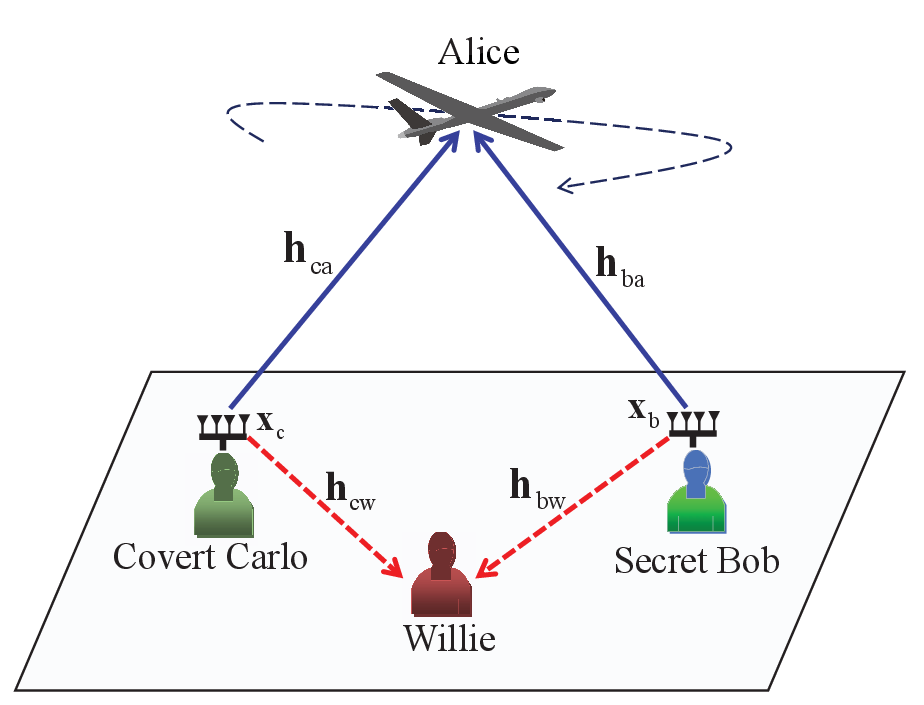}
\caption{System model of collaborative secret and covert uplink transmission for a multi-user multi-antenna UAV system.}
\label{fig:system_model} 
\end{figure}

{We denote the horizontal locations of Bob, Carlo, and Willie as ${\bL}_\rmb \buildrel \Delta \over = {\left[ {x_\rmb,y_\rmb} \right]^T} \in \mathbb R ^{2 \times 1}$, ${\bL}_\rmc \buildrel \Delta \over = {\left[ {x_\rmc,y_\rmc} \right]^T} \in \mathbb R ^{2 \times 1}$, and ${\bL}_\rmw \buildrel \Delta \over = {\left[ {x_\rmw,y_\rmw} \right]^T} \in\mathbb R ^{2 \times 1}$, respectively. Furthermore, consistent with previous studies~\cite{zhou2021uav,chen2021uav_tcom}, we assume that the locations of the ground users are available to Alice, since the accuracy position of the adversary can be obtained using the advanced computer vision technology~\cite{zhang2024earthgpt,zhang2024earthmarker}. Such an assumption allows for effective analysis of secure communication performance, especially when the adversary is either close to or far away from legitimate users.} Suppose that the UAV's flight period and altitude are $T$ and $H$, respectively. The UAV's horizontal coordinate represents as ${\bL}_\rma(t) \buildrel \Delta \over = {\left[ {x_\rma(t),y_\rma(t)} \right]^T} \in \mathbb R^{2 \times 1}$ where $0 \le t \le T$. Note that the continuous time $t$ implies an infinite number of velocity constraints, which makes the trajectory design intractable. To solve the problem, we divide the flight period $T$ into $N\in\bbN$ equal-length time slots. When $N$ is large, the period of each time slot $\delta_t = \frac{T}{N}$ is sufficiently small such that the UAV's location is approximately unchanged within each time slot. Hence, the UAV's trajectory can be denoted by ${\bL}_\rma[n]\buildrel \Delta \over = {\left[ {x_\rma[n],y_\rma[n]},H \right]^T} \in\mathbb R^{3 \times 1}$ where $n\in[N]$. The UAV's initial location and final location are denoted as ${\bL}_\mathrm{a}[0] \buildrel \Delta \over = {\bL}_\mathrm{0}$, ${\bL}_\mathrm{a}[N] \buildrel \Delta \over = {\bL}_{N}$, respectively. For each $n\in[N]$, the UAV's velocity and acceleration in $n$-th time slot are denoted as $\bv[n]$ and $\ba[n]$, respectively. The UAV's maximum velocity, minimum velocity, and maximum acceleration are denoted by $v_\mathrm{max}$, $v_\mathrm{min}$, and $a_\mathrm{max}$, respectively. Thus, during the flight period, the position and velocity of the UAV satisfy that for each $n\in[2:N]$,
\begin{align}
\|\bL_\rma[n]-\bL_\rma[n-1]\|^2 =&\bv[n]\delta_t+\frac{1}{2}\ba[n]\delta_t^2,\label{fly_step}\\
\|\bv_\rma[n]-\bv_\rma[n-1]\|^2=&\ba[n]\delta_t.\label{speed_step}
\end{align}

\subsection{Channel Model}
\subsubsection{Air to Ground Channel}
Consistent with~\cite{chen2021uav_tcom,Chen_UAV_backscatter_secure}, the channel coefficient between Bob/Carlo and Alice is assumed to follow the combination of a large-scale path-loss $\rma_\mathrm{\dagger a}[n]$ and a small-scale fading $\rmg_\mathrm{\dagger a}[n]$, where $\dagger\in\{\rmb, \rmc\}$ denotes either Bob or Carlo. Thus, we have
\begin{align}
{\bh_\mathrm{\dagger a}}[n] = {a_\mathrm{\dagger a}[n]}{\bg_\mathrm{\dagger a}[n]}.
\end{align}
The large-scale path-loss satisfies $\rma_\mathrm{\dagger a}[n] = \sqrt {{\lambda _0}\left(d_\mathrm{\dagger a}[n]\right)^{\xi _\mathrm{\dagger a}}}$,
where $\lambda _0$ is the path-loss reference at one meter, $d_\mathrm{\dagger a}[n]$ represents the distance between Alice and Bob, and $\xi_\mathrm{\dagger a}$ represents the air-ground path-loss exponents. Consistent with the previous study \cite{chen2021uav_tcom,jiang2021resource}, we set $\xi_\mathrm{\dagger a}=-2$.   The small-scale fading ${\bg_\mathrm{\dagger a}[n]}$ is assumed as Rician~\cite{WJX_URLLC} and satisfies
\begin{align}
{\bg_\mathrm{\dagger a}[n]} =
\sqrt {\frac{{{K_\mathrm{\dagger a}}}}{{{K_\mathrm{\dagger a}} + 1}}} {{\widehat \bg}_\mathrm{\dagger a}[n]} + \sqrt {\frac{1}{{{K_\mathrm{\dagger a}} + 1}}} {{\widetilde \bg}_\mathrm{\dagger a}[n]},
\end{align}  
where ${K_\mathrm{\dagger a}}$ is the Rician factor, ${\widehat \bg}_\mathrm{\dagger a}[n]$ and ${{\widehat \bg}_\mathrm{\dagger a}[n]}\! \!\sim\!{\mathcal{ CN}}(0,\bI)$ denotes the LoS channel component and the non-line-of-sight (NLoS) Rayleigh fading component, respectively. To simplify the analysis, we assume that ${K_\mathrm{\dagger a}}$ is constant during the UAV's flight period. This assumption is made because we do not focus on studying the impact of variations in the Rician factor on security performance, and this assumption is consistent with previous studies \cite{XJP_Model, chen2021uav_tcom}, and \cite{Chen_UAV_backscatter_secure}. 

\subsubsection{Ground Channel}
Consistent with~\cite{chen2021uav_tvt,he2024two}, the channel coefficient between Bob/Carlo and Willie is assumed to follow the combination of a large-scale path-loss ${a_\mathrm{\dagger w}[n]}$ and small-scale quasi-static Rayleigh fading ${\bg_\mathrm{\dagger w}[n]}$, and is denoted by
\begin{align}
{\bh_\mathrm{\dagger w}[n]} = {a_\mathrm{\dagger w}[n]}{\bg_\mathrm{\dagger w}[n]},
\end{align}
where  $a_\mathrm{\dagger w}[n]=\sqrt {\eta _0 {\lambda _0} \left(d_\mathrm{\dagger w}[n]\right)^{\xi _\mathrm{\dagger w}}}$, $\eta_0$ is the excessive path-loss coefficient, $d_\mathrm{\dagger w}[n]$ and $\xi_\mathrm{\dagger w}$ represent the distance and path-loss exponents between Bob and Willie, respectively. The small scale Rayleigh fading satisfies ${\bg_\mathrm{\dagger w}}[n] \sim {\cal{ CN}}(0,\bI)$. 

\subsection{Transmission Scheme}
We consider a transmission scheme that Carlo's covert signal hides within Bob's overt signal using the NOMA technology. Due to their different security requirements, Bob chooses a high transmit power to achieve 
a sufficient large secrecy throughput, while Carlo selects an appropriate transmission power to ensure covertness. Carlo achieves covert communication by leveraging Bob's transmission as a disguise. Consequently, Bob continuously transmits a sequence of $m$ symbols $\bx_\rmb[n]=(x^1_\rmb,...,x^m_\rmb)$ in $n$-th time slot, where for each $i\in [m]$, $\mathbb{E}[|x^i_\rmb|^2]=1$. Carlo occasionally transmits the another sequence of symbols $\bx_\rmc[n]=(x^1_\rmc,...,x^m_\rmc)$ in $n$-th time slot, where for each $i\in [m]$, $\mathbb{E}[|x^i_\rmc|^2]=1$. Willie attempts to eavesdrop the confidential message from Bob and detect whether Carlo is transmitting or not. 

Consistent with the previous studies~\cite{Tao_CovertNOMA, Hu_Inverfad}, we assume that Bob and Carlo are synchronized with facilitation by the UAV, thus the channel information state (CSI) of the legitimate users is known to them. The transmission includes two states, i.e., $\mathrm{H_0}$ and $\mathrm{H_1}$. At state $\mathrm{H_0}$, Bob transmits the secret signal sequence $\bx_\rmb[n]$ using a beamformer $\bw_\mathrm{b,0}$. At state $\mathrm{H_1}$, Bob transmits the secret signal $\bx_\rmb[n]$ using another beamformer $\bw_\mathrm{b,1}$, and Carlo transmits covert signals $\bx_\rmc[n]$ using the beamformer $\bw_\mathrm{c}$ to hide the covert signal $\bx_\rmc[n]$ inside the secret signal $\bx_\rmb[n]$. Let $Q_\mathrm{b}^\mathrm{max}$ and $Q_\mathrm{c}^\mathrm{max}$ denote the maximum transmit powers of Bob and Carlo, respectively. Thus, the beamformers satisfy $\max\{\|\bw_\mathrm{b,0}\|^2,\|\bw_\mathrm{b,1}\|^2\}\le Q_\mathrm{b}^\mathrm{max}$, and $\|\bw_\mathrm{c}\|^2\le Q_\mathrm{c}^\mathrm{max}$. {Using the power domain NOMA to superimpose the secret and covert signals, we can enhance the covertness while ensuring secrecy by hiding Carlo’s covert signal within Bob’s higher-power transmission, since Carlo’s signal can be considered as AN to decrease the SNR of Willie.  Therefore, the superposition of the secret and covert signals makes it more challenging for Willie to both eavesdrop on the secret signal and detect the covert transmission.}

For the legitimate receiver Alice, the secret signal $\bx_\rmb[n]$ is first decoded. Using successive interference cancellation (SIC) technology, the covert signal $\bx_\rmc[n]$ is decoded subsequently. On the adversary's side, Willie will receive either Bob's signal or the superimposed signal of both Bob and Carlo. Therefore, the received signals at Alice and Willie in the $n$-th time slot can be respectively expressed as
\begin{align}
\by[n] \!=\! \left\{\!\!\! \begin{array}{ll}
{\bh^H_\mathrm{ba}[n]}\bw_\mathrm{b,0}[n]\bx_\rmb[n] + {\bn_\mathrm{a}[n]},&\!\mathrm{H_0},\\
\\
{\bh^H_\mathrm{ba}[n]}\bw_\mathrm{b,1}[n]\bx_\rmb[n]\! + \! {\bh^H_\mathrm{ca}[n]}\bw_\mathrm{c}[n]\bx_\rmc[n]\! +  \!{\bn_\mathrm{a}[n]},&\!\mathrm{H_1},
\end{array} \right.
\end{align}
and
\begin{align}
\bz[n]\!=\! \left\{\!\!\! \begin{array}{ll}
{\bh^H_\mathrm{bw}[n]}\bw_\mathrm{b,0}[n]\bx_\rmb[n]  +   {\bn_\mathrm{w}[n]},&\!\mathrm{H_0},\\
\\
{\bh^H_\mathrm{bw}[n]}\bw_\mathrm{b,1}[n]\bx_\rmb[n] \!+\!    {\bh^H_\mathrm{cw}[n]}\bw_\mathrm{c}[n]\bx_\rmc[n] \! +  \! {\bn_\mathrm{w}[n]},&\!\mathrm{H_1},
\end{array} \right.
\end{align}
where ${\bn_\mathrm{a}}[n]\!\sim\! \mathcal{CN}(0,{\sigma}^2_\rma[n]) $ and ${\bn_\mathrm{w}}[n]\!\sim\! \mathcal{CN}(0,\sigma^2_\rmw[n]) $ are the additive noise at Alice and Willie in $n$-th time slot, respectively, and ${\sigma}^2_\rma[n]$ and $\sigma^2_\rmw[n]$ are the noise power in $n$-th time slote at Bob and Willie, respectively.

Using the proposed secure transmission scheme, the signal-to-noise ratio (SNR) values of secret signal $\bx_\rmb[n]$ at states $\mathrm{H_0}$ and $\mathrm{H_1}$ are respectively given by 
\begin{align}
{\mathrm{SNR_{ba,0}}[n]} &=    \frac{ \left| \bh^H_\mathrm{ba}[n]\bw_\mathrm{b,0}[n] \right|^2 }{{\sigma _\mathrm{a}^2[n]}},\\
{\mathrm{SNR_{ba,1}}[n]} &=    \frac{ \left| \bh^H_\mathrm{ba}[n]\bw_\mathrm{b,1}[n] \right|^2 }{\left| \bh^H_\mathrm{ca}[n]\bw_\mathrm{c}[n] \right|^2+\sigma _\mathrm{a}^2[n]}   \label{SINR_b_1}.
\end{align}
Using SIC, the SNR value of the covert signal $\bx_\rmc[n]$ is 
\begin{align}
{\mathrm{SNR_{ca}}[n]} =    \frac{ \left| \bh^H_\mathrm{ca}[n]\bw_\mathrm{c}[n] \right|^2 }{{\sigma _\mathrm{a}^2[n]}} .
\end{align}
For Willie, the SNR values of the secret signal $\bx_\rmb[n]$ at states $\mathrm{H_0}$ and $\mathrm{H_1}$ are respectively given by
\begin{align}
{\mathrm{SNR_{bw,0}}[n]} &= \frac{ \left| \bh_\mathrm{bw}^H[n]\bw_\mathrm{b,0}[n] \right|^2 }{{\sigma _\mathrm{w}^2[n]}} ,
\end{align}
\begin{align}
{\mathrm{SNR_{bw,1}}[n]} &= \frac{ \left| \bh_\mathrm{bw}^H[n]\bw_\mathrm{b,1}[n] \right|^2 }{\left| \bh_\mathrm{cw}^H[n]\bw_\mathrm{c}[n] \right|^2+\sigma _\mathrm{w}^2[n]}. 
\end{align}
Based on these SNR values in different cases, the secret and covert performance can be analyzed.

\subsection{Optimization Problem Formulation}
Let $R_\rmt$ and $R_\rmc$ represent the target transmission rate of Bob and the target covert rate of Carlo, respectively. Furthermore, let $R_\rms[n]$ and $R_\rmw[n]$ denote the target secret rate and the secret rate redundancy of the wiretap channel in $n$-th time slot involving legitimate users Alice and Bob and the malicious user Willie, respectively~\cite{csiszar1978broadcast}. Note that the secret rate redundancy $R_\rmw[n]:=R_\rmt-R_\rms[n]$ reflects the ability to secure the transmission against the wiretapping~\cite{wang2019secrecy}. Based on Shannon theory~\cite{Shannon1949}, Alice can decode the message from Bob correctly if the capacity $C_\mathrm{ba}$ for the channel from Bob to Alice is greater than the target transmission rate, i.e., $C_\mathrm{ba}>R_\rmt$, which is equivalent to $\mathrm{SNR_\mathrm{ba}}[n]\ge\gamma_\mathrm{ba}=2^{R_\rmt}-1$. Otherwise, a connection outage event occurs. According to Wyner's wiretap channel coding theory~\cite{wyner1975wire}, secret communication fails when the capacity $C_\mathrm{bw}[n]$ for the channel from the transmitter to Willie in $n$-th time slot is greater than the rate redundancy $R_\rmw[n]$, i.e., $C_\mathrm{bw}[n]\ge R_\rmw[n]$, which is equivalent to $\mathrm{SNR_\mathrm{bw}}[n]\!\ge\!\gamma_{\mathrm{bw}}[n]=2^{R_\rmw[n]}-1$, in this case, a secrecy outage event occurs.

We assume that Bob and Carlo's target transmission rates $R_\rmt$ and $R_\rmc$ remain constant during the UAV's flight period. Therefore, the SCP between Bob and Alice in $n$-th time slot at states $\rmH_0$ and $\rmH_1$ can be respectively expressed as 
\begin{align}
{{\rmP^{\mathrm{{ba,0}}}_\mathrm{sc}}} [n]=1-\mathrm{Pr} \left\{{\mathrm{SNR_{ba,0}}}[n]\le 2^{R_\rmt}-1 \right\},
\end{align}
\begin{align}
{{\rmP^{\mathrm{{ba,1}}}_\mathrm{sc}}}[n] =1-\mathrm{Pr} \left\{{\mathrm{SNR_{ba,1}}}[n] \le 2^{R_\rmt}-1 \right\}.\label{psc1}
\end{align}
Analogously, the Bob's SOPs in $n$-th time slot at states $\rmH_0$ and $\rmH_1$ can be respectively given as 
\begin{align}
{{\rmP^{\mathrm{{bw,0}}}_\mathrm{so}}}[n] =\mathrm{Pr} \left\{{\mathrm{SNR_{bw,u0}}} [n] \ge 2^{R_\rmw[n]}-1 \right\},
\end{align}
\begin{align}
{{\rmP^{\mathrm{{bw,1}}}_\mathrm{so}}}[n] =\mathrm{Pr} \left\{{\mathrm{SNR_{bw,u1}}}[n] \ge 2^{R_\rmw[n]}-1 \right\}.
\end{align}
Furthermore, the CCP between Carlo and Alice in $n$-th time slot at state $\rmH_1$ can be given as 
\begin{align}
{\rmP^{\mathrm{{ca}}}_\mathrm{cc}}[n] =\!\mathrm{Pr}\! \left\{{\mathrm{SNR_{b,1}}[n]}  \!\ge \!2^{R_\rmt}\!-\!1, \mathrm{SNR_{\rmc}} [n]\!\ge\! 2^{R_\rmc}\!-\!1\right\}\!.
\end{align}
Finally, Willie's  DEP in the $n$-th time slot satisfies
\begin{align}
\mathrm{P_e}[n] = \mathrm{P_F}[n]+ \mathrm{P_M}[n],\label{dep}
\end{align}
where $\mathrm{P_F}[n]$ and $\mathrm{P_M}[n]$ are the false alarm (FA) and miss detection (MD) error probabilities in the $n$-th time slot, respectively, which will be analyzed in Eqs. \eqref{44} and \eqref{45}.

{Under the above definitions, we formulate the optimization problems under both states $\rmH_0$ and $\rmH_1$ subject to diverse security constraints and the UAV's flight parameters.
At each time slot $n\in[N]$, given legitimate users' beamformers $(\bw_\mathrm{b,1}[n],\bw_\mathrm{c}[n])$ and UAV's flight parameters $(\bv[n],\ba[n],{\bL}_\rma[n])$, the multi-objective optimization problem under $\rmH_1$ is can be given as follows
\begin{subequations}
\begin{align}  
{\rmP0}:\!\!\max_{\substack{\bw_\mathrm{b,1}[n],\!\bw_\mathrm{c}[n],\\\bv[n],\ba[n],{\bL}_\rma[n]}} \quad & \left[{\bf\Phi}_s,{\bf\Phi}_c\right]\label{obj_fun_P1}\\*
\mbox{s.t.}~~~~~~~~&{{\rmP^{\mathrm{{bw,1}}}_\mathrm{so}}}[n]\le \eta_\rms, \label{pso_1}\\*
& \mathrm{P_e}[n] \ge 1-\epsilon, \label{dep_constraint}\\*
& \|\bw_\mathrm{b,1}[n]\|^2\le Q_\mathrm{b}^\mathrm{max},\label{power_b_constraint_1}  \\    
&\|\bw_\mathrm{c}[n]\|^2\le Q_\mathrm{c}^\mathrm{max}, \label{power_c_constraint} \\ 
&v_\mathrm{min}\!\!\le\!\!\|\bv[n]\|^2 \!\!\le \!\!v_\mathrm{max},\label{velocity_constraint}\\ 
&\|\ba[n]\|^2\le a_\mathrm{max},\label{acc_constraint}\\
&\bL_\rma[0]= \bL_0,\bL_\rma[N]=\bL_N,\label{L_start_end}\\ 
&\bv[0]=v_0,\bv[n]=v_N,\label{v_start_end},\\*
&\eqref{fly_step},~\eqref{speed_step}\label{fly_speed_step}.
\end{align}
\end{subequations}
where ${\bf\Phi}_s=\frac{1}{N}\sum\limits_{n = 1}^N {{\rmP^{\mathrm{{ba,1}}}_\mathrm{sc}}[n]{R^1_\rms}[n]}$ and ${\bf\Phi}_c=\frac{1}{N}\sum\limits_{n = 1}^N {{\rmP^{\mathrm{{ca}}}_\mathrm{cc}}[n]{R_\rmc}}$ represent Bob's average secret and Carlo's average covert rate. 
Note that ${R^1_\rms}[n]$ is the secret rate in $n$-th time slot, which obtains from
\begin{align}
R^1_\rms[n]=\mathrm{max}\{R_\rmt-R^1_\rmw[n],0\}.
\end{align}
Note that the constraints \eqref{pso_1} and \eqref{dep_constraint} represent Bob's SCP and Willie's DEP at state $\rmH_1$, respectively. {The constraints \eqref{power_b_constraint_1} and \eqref{power_c_constraint} are related to Bob's and Carlo's beamformers.} Finally, constraints \eqref{velocity_constraint}, \eqref{acc_constraint}, \eqref{L_start_end}, \eqref{v_start_end}, and \eqref{fly_speed_step} concern the UAV's velocity, acceleration, and trajectory. 
Since the aforementioned multi-objective problem $\rmP0$ is complicated to solve, we use a coefficient $\kappa$ to weight Bob's average secret rate and $1-\kappa$ to weight Carlo's average covert rate, where $ \kappa\in (0,1)$. Inspired by \cite{marler2010weighted}, the multi-objective optimization function at each time slot $n\in[N]$ can be  transformed as
\begin{align}
&R\left(\bw_\mathrm{b,1}[n],\bw_\mathrm{c}[n],\bv[n],\ba[n],{\bL}_\rma[n]\right)\nn\\*
&=\kappa\frac{1}{N}\sum\limits_{n = 1}^N \!{{\rmP^{\mathrm{{ba,1}}}_\mathrm{sc}}[n]{R^1_\rms}[n]}
+\left(1-\kappa\right)\frac{1}{N}\sum\limits_{n = 1}^N \! {{\rmP^{\mathrm{{ca}}}_\mathrm{cc}}[n]{R_\rmc}} .
\end{align}
Therefore, the corresponding optimization problem under $\rmH_1$ for each $n\in[N]$ can be formulated as:
\begin{subequations}
\begin{align}  
{\rmP1}:\!\!\max_{\substack{\bw_\mathrm{b,1}[n],\!\bw_\mathrm{c}[n],\\\bv[n],\ba[n],{\bL}_\rma[n]}} \quad &\!\!  R\!\left(\bw_\mathrm{b,1}\![n],\bw_\mathrm{c}[n],\bv[n],\ba[n],{\bL}_\rma[n]\right)\label{obj_fun_P1}\\*
\mbox{s.t.}~~~~~~~~
&\eqref{fly_step},~\eqref{speed_step},~\eqref{pso_1}-\eqref{v_start_end}.
\end{align}
\end{subequations}
Thus, the multi-objective optimization problem $\rmP0$ is transformed into the single-objective optimization problem $\rmP1$, and a feasible solution will be given in Section III.}

Under state $\mathrm{H_0}$, only Bob transmits. The corresponding optimization problem for $\forall n\in[N]$ is formulated as follows:
\begin{subequations}
\begin{align}  
{ \rmP2}: \max_{\substack{\bw_\mathrm{b,0}[n],\\\bv[n],\ba[n],{\bL}_\rma[n]}} \quad& \frac{1}{N}\sum\limits_{n = 1}^N{{{\rmP^{\mathrm{{ba,0}}}_\mathrm{sc}}} [n]{R^0_\rms}[n]} \label{obj_fun_P2}\\*
\mbox{s.t.} \quad~~~~~
&{{\rmP^{\mathrm{{bw,0}}}_\mathrm{so}}}[n]\le \eta_s,  \label{pso_0}\\*
& \|\bw_\mathrm{b,0}[n]\|^2\le Q_\mathrm{b}^\mathrm{max},\label{power_b_constraint_0}    \\*     
&\eqref{fly_step},~\eqref{speed_step},~\eqref{velocity_constraint}-\eqref{v_start_end}.
\end{align}
\end{subequations}
where ${R^0_\rms}[n]$ is the target secret rate in $n$-th time slot, which is obtained in a similar way to ${R^1_\rms}[n]$. Note that objective function \eqref{obj_fun_P2} is Bob's average effective secret rate. The constraints \eqref{pso_0} and \eqref{power_b_constraint_0} are related to Bob's SOP and beamformer.
\vspace{4mm}
\section{Main Results} \label{sec:main_results}
\subsection{Derivations of Optimization Constraints}
To derive the closed forms of SCP, SOP, and CCP, we use the following Lemma \ref{lem-1} from~\cite{park2012outage}.
\begin{lemma} \label{lem-1}
For a Gaussian random vector $\bx\sim \mathcal{CN}({\bf\mu},{\bf\Sigma})$ with the eigen-decomposition of its covariance matrix $\bf\Sigma={\bf\Psi}\Lambda {\bf\Psi}^H$. For a given $\bf\bar Q $, the cumulative distribution function (CDF) of ${{\bx}^H}{\bf\bar Q}{\bx}$ is given by 
\begin{align}
{\rm Pr}\{{{\bx}^H}{\bf\bar Q}{\bf x}\leq\tau\}
\!=\!\frac{1}{ 2\pi}\!\int_{-\infty}^{\infty}\! {\frac{e^{\tau(j\omega+\beta)}}{j\omega+\beta}}\!\times \!\frac{e^{-c}}{\det({\bf I}\!+\!(j\omega\!+\!\beta){\bf Q})}d\omega , \label{eq}
\end{align}
for some $\beta\ge0$ such that ${\bf I}+\beta {\bf Q}$ is positive definite, where 
${\bf Q}={{\bf\Lambda} ^{H/2}}{{\bf\Psi} ^H}{\bf\bar Q} {\bf\Psi} {{\bf\Lambda} ^{{1}/2}}$, ${\bf\chi} ={{\bf\Lambda} ^{{ 1}/2}}{{\bf\Psi} ^H}\bmu $, and  $c = {{\bf\chi} ^H}{\left( {\bI + \frac{1}{{jw + \beta }}{{\bf\bar Q}^{ - 1}}} \right)^{ - 1}}{\bf\chi} $.
\end{lemma}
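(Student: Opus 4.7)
The plan is to recognize the right-hand side of \eqref{eq} as a Bromwich inversion applied to the moment generating function (MGF) of $Y := \bx^H\bar{\bQ}\bx$. Indeed, for any $\beta > 0$ in the half-plane of convergence of $M_Y(s) := \mathbb{E}[e^{-sY}]$, the Laplace inversion identity reads $F_Y(\tau) = \frac{1}{2\pi j}\int_{\beta - j\infty}^{\beta + j\infty} e^{\tau s} M_Y(s)/s\, ds$, and under the change of variables $s = \beta + j\omega$ this becomes exactly the outer $\frac{1}{2\pi}\int_{-\infty}^{\infty} \frac{e^{\tau(\beta+j\omega)}}{\beta+j\omega} M_Y(\beta+j\omega)\, d\omega$ structure appearing on the RHS of \eqref{eq}. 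Hence the whole task reduces to evaluating $M_Y(s)$ in closed form and identifying it with $e^{-c(s)}/\det(\bI + s\bQ)$.

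To compute $M_Y(s)$, I would first whiten $\bx$ using the eigen-decomposition $\bSigma = \bPsi\Lambda\bPsi^H$: after the unitary-diagonal change of variables taking $\bx$ to a standard complex Gaussian, the quadratic form becomes $\bu^H\bQ\bu$ with $\bQ = \Lambda^{H/2}\bPsi^H\bar{\bQ}\bPsi\Lambda^{1/2}$ as in the statement, and the non-zero mean is absorbed into a translate $\bchi$ of the form given in the Lemma. Substituting the standard complex-Gaussian density into $\mathbb{E}[e^{-s\bu^H\bQ\bu}]$ and completing the square in $\bu$ around $\bu^{\star} = (\bI+s\bQ)^{-1}\bchi$ reduces the MGF to a classical Gaussian integral whose value is $\det(\bI+s\bQ)^{-1}\exp\{-\bchi^H(\bI-(\bI+s\bQ)^{-1})\bchi\}$. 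The exponent then collapses to the stated $c(s)$ via the push-through identity $\bI - (\bI+s\bQ)^{-1} = (\bI + s^{-1}\bQ^{-1})^{-1}$, while the determinant factor supplies the $\det(\bI + (j\omega+\beta)\bQ)$ appearing in \eqref{eq} after substituting $s = j\omega+\beta$.

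The remaining ingredient is the choice of $\beta \ge 0$ such that $\bI + \beta\bQ$ is positive definite: this both guarantees absolute convergence of the Gaussian integral defining $M_Y(s)$ along the shifted contour $\{\beta + j\omega : \omega \in \mathbb{R}\}$ and makes the simple pole of $e^{\tau s}/s$ at $s = 0$ the only singularity of the integrand inside the Bromwich strip, which is precisely what manufactures the CDF upon contour closure. Combining the pieces yields \eqref{eq}. The main obstacle, in my view, is the algebraic bookkeeping that reconciles the raw MGF (naturally expressed in $\bSigma$, $\bar{\bQ}$, $\bmu$) with the whitened quantities $(\bQ,\bchi)$ of the Lemma --- concretely, verifying the determinantal reduction $\det(\bI + s\bSigma\bar{\bQ}) = \det(\bI + s\bQ)$ via Sylvester's identity, together with the corresponding simplification of the quadratic-in-$\bmu$ exponent under the eigenbasis change. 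Beyond this, everything is routine complex-Gaussian integration and standard Laplace inversion.
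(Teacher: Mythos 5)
Your derivation is correct, but note that the paper itself offers no proof of Lemma~\ref{lem-1}: it is imported verbatim from the cited reference \cite{park2012outage}, so there is no in-paper argument to compare against. What you propose --- whitening $\bx$ via $\bSigma=\bPsi\bLambda\bPsi^H$, computing the moment generating function of the Hermitian quadratic form by completing the square (yielding $\det(\bI+s\bQ)^{-1}\exp\{-\bchi^H(\bI-(\bI+s\bQ)^{-1})\bchi\}$), applying the push-through identity $\bI-(\bI+s\bQ)^{-1}=(\bI+s^{-1}\bQ^{-1})^{-1}$, and then invoking the Bromwich inversion of $M_Y(s)/s$ along $\operatorname{Re}(s)=\beta$ --- is the standard route and is exactly the argument underlying the cited result, so every step checks out. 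Two small observations are worth recording. First, your derivation actually produces $c=\bchi^H\bigl(\bI+\tfrac{1}{j\omega+\beta}\bQ^{-1}\bigr)^{-1}\bchi$ with the \emph{whitened} matrix $\bQ$, whereas the lemma as printed writes $\bar{\bQ}^{-1}$; likewise the whitening $\bx=\bmu+\bPsi\bLambda^{1/2}\bu$ gives a translate $\bLambda^{-1/2}\bPsi^H\bmu$ rather than the printed $\bLambda^{1/2}\bPsi^H\bmu$. Both appear to be transcription slips in the statement that your computation silently corrects, and neither affects the paper's downstream use (where $\bmu=\mathbf{0}$, so $c=0$, and $\bar{\bQ}=\bI$, so $\bQ=\bLambda$). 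Second, your closing remark that the pole at $s=0$ ``manufactures the CDF upon contour closure'' conflates establishing the identity with evaluating it: the inversion formula $F_Y(\tau)=\frac{1}{2\pi j}\int_{\beta-j\infty}^{\beta+j\infty}e^{\tau s}M_Y(s)s^{-1}\,ds$ holds for any $\beta>0$ in the convergence strip regardless of contour closure; the residue calculus enters only later, when the paper evaluates the integral in Eqs.~\eqref{27}--\eqref{contour_integration}. This is a presentational point, not a gap.
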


\subsubsection{Secrecy Connection Probability}
Bob' SCP at state $\rmH_1$ is given as
\begin{align}
\!{{\rmP^{\mathrm{{ba,1}}}_\mathrm{sc}}}[n]\!=&1-\mathrm{Pr} \left\{\bx_\mathrm{ca}^H[n]\bx_\mathrm{ca}[n]\geq\tau_\mathrm{ca}^1[n]\right\}\nn\\[3mm]
=&1\!\!-\!\text{exp}\!\!\left(\!\! -\frac{ \left| \bh^H_\mathrm{ba}[n]\bw_\mathrm{b,1}[n] \right|^2\!\!-\!\left(2^{R_\rmt}\!-\!1\right)\!\sigma _\mathrm{a}^2[n]  }{\left(2^{R_\rmt}-1\right)\left| \bh^H_\mathrm{ca}[n]\bw_\mathrm{c}[n] \right|^2}\!  \right)\!\!.\label{scp1_fun}
\end{align}
\begin{proof}
To obtain the result of $\mathrm{Pr} \left\{{\mathrm{SNR_{ba,1}}}[n]\! \le \!2^{R_\rmt}\!-\!1 \right\}$ in Eq. \eqref{psc1}, we define $\bx_\mathrm{ca}[n]=\bh^H_\mathrm{ca}[n]\bw_\mathrm{c}[n]$. It follows from Eqs. \eqref{SINR_b_1} and \eqref{psc1}, that $\bx_\mathrm{ca}[n]\sim \mathcal{CN}(0,\lambda^1_{\rm{ca}}\bI)$ and we further have 
\begin{align}
\bx_\mathrm{ca}^H[n]\bx_\mathrm{ca}[n]\geq   \frac{ \left| \bh^H_\mathrm{ba}[n]\bw_\mathrm{b,1}[n] \right|^2 }{2^{R_\rmt}-1} -\sigma_\mathrm{a}^2 [n]=:\tau_\mathrm{ca}^1[n], \label{taoco1}
\end{align}
based on {Lemma \ref{lem-1}} and let ${\bf\bar Q}=\bI$, it follows from Eqs. \eqref{eq} and \eqref{taoco1} that
\begin{align}
&{\rm Pr}\{{{\bf x}_\mathrm{ca}^H}[n]{\bf x}_\mathrm{ca}[n]\!\geq\!\tau_\mathrm{ca}^1[n]\}\nn\\[3mm]
&=1\!-\!\frac{1}{ 2\pi j}\int_{-\infty}^{\infty} \!\frac{e^{\tau_\mathrm{ca}^1[n] (j\omega+\beta)}}{ j\omega\!+\!\beta}\!\times\!\frac{1}{\det({\bf I}\!+\!(j\omega\!+\!\beta){\bf\Lambda})}\rmd\omega \label{27}\\[3mm]
&=\!1\!-\!\frac{1}{ 2\pi j}\int_{\beta-j\infty}^{\beta+j\infty} \frac{e^{\tau_\mathrm{ca}^1[n] s}}{s}\times \frac{1}{ 1+s\lambda ^1_{\rm{ca}}[n]}\rmd s \label{28}\\[3mm]
&=\!1\!-\!\frac{1}{{2\pi j}}\oint_{\mathcal C} {\frac{{{e^{\tau_\mathrm{ca}^1[n] s}}}}{s} \times \frac{1}{{{1 + s\lambda ^1_{\rm{ca}}[n]}}}\rmd s}, \label{contour_integration}
\end{align}
where Eq. \eqref{27} follows from Lemma \ref{lem-1}, Eq. \eqref{28} follows by setting $s=\beta+j\omega$, and in Eq. \eqref{contour_integration}, ${\cal C}$ denotes a contour of integration that includes the imaginary axis and the entire left half of the complex plane. 
And $\lambda ^1_{\rm{ca}}[n]=\left| \bh^H_\mathrm{ca}[n]\bw_\mathrm{c}[n] \right|^2$ is the eigenvalue of  $\bx_{\rm{ca}}[n]$. To derive the expression of Eq. \eqref{contour_integration}, we define 
\begin{align}
F(s)&=\frac{{{e^{\tau_\mathrm{ca}^1[n] s}}}}{s} \times \frac{1}{1 + s\lambda ^1_{\rm{ca}}[n] }. \label{residue_function}
\end{align}
It follows from Eq. \eqref{residue_function} that the residue of $F(s)$ at $s =0$ is  
\begin{align}
{\rm{Res}}\!\left(F(s),0\right) \!\!=\!\! \mathop {\lim }\limits_{s \to 0} s\!\! \times\!\!\frac{{{e^{\tau_\mathrm{ca}^1[n] s}}}}{{s(1\!\! +\!\! s\lambda ^1_{\rm{ca}}[n])}} 
\!\!= \!\!\mathop {\lim }\limits_{s \to 0} \frac{{{e^{\tau_\mathrm{ca}^1[n] s}}}}{{1 \!\!+\!\! s\lambda ^1_{\rm{ca}}[n]}} 
\!\!=\!\! 1.
\end{align}
{Furthermore, the residue of $F(s)$ at $s =-\frac{1}{\lambda ^1_{\rm{ca}}[n]}$ satisfies}
\begin{align}
{\rm{Res}}\!\left(\!\!F(s), - \frac{1}{\lambda ^1_{\rm{ca}}[n] }\!\right) &=\mathop {\lim }\limits_{s \to  - \frac{1}{\lambda ^1_{\rm{ca}}[n] }} \left(s + \frac{1}{\lambda ^1_{\rm{ca}}[n] }\right) \!\!\times \!\!\frac{{{e^{\tau_\mathrm{ca}^1[n] s}}}}{{s(1 + s\lambda ^1_{\rm{ca}} )}}\\
&=\mathop {\lim }\limits_{s \to  - \frac{1}{\lambda ^1_{\rm{ca}} }} \frac{{{e^{\tau_\mathrm{ca}^1[n] s}}}}{{s\lambda ^1_{\rm{ca}}[n] }}\\
&=- e^{ - \frac{\tau_\mathrm{ca}^1[n]} {\lambda ^1_{\rm{ca}}[n]} }.
\end{align}
Thus, 
\begin{align}
{\rm Pr}\{{{\bf x}_\mathrm{ca}^H[n]}{\bf x}_\mathrm{ca}[n]\geq\tau_\mathrm{ca}^1[n]\}&\!\!=\!\!1\!\!-\!\!\left( {\mathop {{\mathop{\rm Re}\nolimits}~ s}\limits_{s = 0} \!F\left( s \right) + \mathop {{\mathop{\rm Re}\nolimits}}\limits_{s = 1/\tau_\mathrm{ca}^1[n] } sF\left( s \right)} \right)\nn\\
&= e^{ - \frac{\tau_\mathrm{ca}^1[n]} {\lambda ^1_{\rm{ca}}[n]} }\!.\label{psc1_result}
\end{align}
Combine the above definitions of ${\lambda ^1_{\rm{ca}}}$, $\tau_\mathrm{ca}^1$, and Eq. (\ref{psc1_result}), we can obtain the result in Eq. \eqref{scp1_fun}.
\end{proof}
Analogously, Bob's SCP at state $\rmH_0$ can be given as
\begin{align}
{{\rmP^{\mathrm{{ba,0}}}_\mathrm{sc}}[n]}&=1-\mathrm{Pr} \left\{\bx_\mathrm{ca}^H[n]\bx_\mathrm{ca}[n]\geq\tau_\mathrm{ca}^0[n]\right\}\nn\\&=\text{exp}\left(\!\! -\frac{(2^{R_t}\!-\!1 ){\sigma _\mathrm{a}^2[n]}}{ \left| \bh^H_\mathrm{ba}[n]\bw_\mathrm{b,0}[n] \right|^2}  \right) .\label{scp0}
\end{align}

\subsubsection{Secret Outage Probability} 
Analogous to the derivation of SCP, Bob's SOP at state $\rmH_0$ and $\rmH_1$ can be respectively given as
\begin{align}
\rmP_\mathrm{so}^\mathrm{bw,0}[n]&=\text{exp}\left( -\frac{(2^{R_\rmw}-1 ){\sigma _\mathrm{w}^2[n]}}{\left| \bh^H_\mathrm{bw}[n]\bw_\mathrm{0}[n] \right|^2}  \right),\label{sop0}
\end{align}
\begin{align}
\rmP_\mathrm{so}^\mathrm{bw,1}[n]&=1-\text{exp}\!\!\left(\! -\frac{ \left| \bh^H_\mathrm{bw}[n]\bw_\mathrm{b,1}[n] \right|^2\!-\!\left(2^{R_\rmw}\!-\!1\right)\sigma _\mathrm{w}^2[n] }{\left(2^{R_\rmw}-1\right)\left| \bh^H_\mathrm{cw}[n]\bw_\mathrm{c}[n] \right|^2} \right)\label{sop1}.
\end{align}
\subsubsection{Covert Connection Probability} 
Based on the SIC scheme, Carlo's CCP equals Bob's SCP multiplied by Carlo's connection probability after Alice decodes Bob's secret message. Therefore, Carlo's CCP is given by
\begin{align}
{\rmP^{\mathrm{{ca}}}_\mathrm{cc}}[n]=&\left[1\!\!-\!\text{exp}\!\!\left(\! -\frac{ \left| \bh^H_\mathrm{ba}[n]\bw_\mathrm{b,1}[n] \right|^2-\left(2^{R_\rmt}\!\!-\!1\right)\!\sigma _\mathrm{a}^2[n]  }{\left(2^{R_\rmt}-1\right)\left| \bh^H_\mathrm{ca}[n]\bw_\mathrm{c}[n] \right|^2}\!  \right)\!\right]\nn\\
&\times\! \text{exp}\left( -\frac{(2^{R_\rmc}-1 ){\sigma _\mathrm{a}^2}[n]}{\left| \bh^H_\mathrm{ca}[n]\bw_\mathrm{c}[n] \right|^2}  \right).\label{ccp}
\end{align}
Note that the above formulation holds because the ${\mathrm{SNR_{ba,1}}[n]}$ and ${\mathrm{SNR_{ca}}[n]}$ are independent.
\subsubsection{Detection Performance at Willie} 
On Willie's side, there are two possible states exist for detecting Carlo's transmission, i.e., $\mathrm{H_0}$ and $\mathrm{H_1}$, where $\mathrm{H_0}$ represents the null state, indicating that Carlo is keeping silence, and $\mathrm{H_1}$ denotes Carlo is transmitting signals to Alice. Therefore, the variances of the received signals corresponding to these two states at Willie are
\begin{align}
\left\{\!\!\!\! \begin{array}{ll}
\Sigma_\mathrm{w}^0[n]\!\!:=\left|{\bh^H_\mathrm{bw}}[n]\bw_\mathrm{b,0}[n]\right|^2 + {\sigma_\mathrm{w}^2[n]},&\mathrm{H_0},\\
\\
\Sigma_\mathrm{w}^1[n]\!\!:=\!\!\left| {\bh^H_\mathrm{bw}}[n]\bw_\mathrm{b,1}[n]\right|^2 + \left| {\bh^H_\mathrm{cw}}[n]\bw_\mathrm{c}[n]\right|^2+ {\sigma_\mathrm{w}^2[n]},&\mathrm{H_1}.
\end{array} \right.
\end{align}
To detect whether Carlo is transmitting to Alice or not, Willie uses binary hypothesis testing. Thus, the optimal test that minimizes $\mathrm{P_e}[n]$ is the likelihood ratio test with $\psi^* = 1$ as the threshold~\cite{yan2018delay}, which is given by
\begin{align}
\frac{{{\mathbb P_\mathrm{1}} \buildrel \Delta \over = \prod\nolimits_{{\rm{i}} = 1}^m {f\left( {{z^i}[n]|{{\rm{H}}_1}} \right)} }}{{{\mathbb P_\mathrm{0}} \buildrel \Delta \over = \prod\nolimits_{{\rm{i}} = 1}^m {f\left( {{z^i}[n]|{{\rm{H}}_0}} \right)} }}\underset {\mathrm {H}_{0}}{\overset {\mathrm {H}_{1}}{\gtrless }}1,
\end{align}
where $f\left( {{z^i}[n]|{{\rm{H}}_1}} \right) = \mathcal{CN} (0,\Sigma_\mathrm{w}^1[n])$ and $f\left( {{z^i}[n]|{{\rm{H}}_0}} \right) = \mathcal{CN}(0,\Sigma_\mathrm{w}^0[n])$ are the likelihood functions of ${z^i}[n]$ at states $\mathrm{H_1}$ and $\mathrm{H_0}$, respectively.
Therefore, Willie's optimal decision rule to minimize the total error rate in $n$-th time slot is 
\begin{align}
{ T_\rmw}[n] \buildrel \Delta \over = \frac{1}{m}\sum\limits_i^m  {\left| {{z^i}[n]} \right|^2}\underset {\mathrm {H}_{0}}{\overset {\mathrm {H}_{1}}{\gtrless }} {Q_\mathrm{th}}[n],
\end{align}
where the test statistic $T_\rmw[n]$ is the average power of each received symbol at Willie and ${Q^*_\mathrm{th}}[n]$ is the optimal threshold for $T_\rmw[n]$ in $n$-th time slot, which is given by
\begin{equation} 
{Q^*_\mathrm{th}}[n] = \frac {\Sigma_\mathrm{w}^1[n] \Sigma_\mathrm{w}^0[n]}{\Sigma_\mathrm{w}^1[n]-\Sigma_\mathrm{w}^0[n]} \ln \left (\frac {\Sigma_\mathrm{w}^1[n]}{\Sigma_\mathrm{w}^0[n]}\right)\!. 
\end{equation}
Note that the radiometer with ${Q^*_\mathrm{th}}[n]$ is indeed the optimal detector when Willie knows the likelihood functions exactly (i.e., when there are no nuisance parameters embedded in the likelihood functions). Note that $T_\rmw[n]$ is a chi-squared random variable with $2n$ degrees of freedom, the likelihood functions of $T_\rmw[n]$ at states $\mathrm{H_0}$ and $\mathrm{H_1}$ are respectively given by 
\begin{align} 
f({ T_\rmw}[n]| \mathrm{H}_{0})=&\frac {T_\rmw^{m-1}[n]}{\Gamma (m)}\left ({\frac {m}{\Sigma_\mathrm{w}^0[n]}}\right)^{m} e^{-\frac {m{T_\rmw}[n]}{\Sigma_\mathrm{w}^0[n]}}, 
\end{align}
\begin{align} 
f({ T_\rmw}[n]| \mathrm {H}_{1})=&\frac {T_\rmw^{m-1}[n]}{\Gamma (m)}\left ({\frac {m}{\Sigma_\mathrm{w}^1[n]}}\right)^{m} e^{-\frac {m{T_\rmw}[n]}{\Sigma_\mathrm{w}^1[n]}}, 
\end{align}
where $\Gamma (m)=(m-1)!$ is the Gamma function. Hence, the false alarm probability $\mathrm{P_F}[n]$ and the miss detection probability $\mathrm{P_M}[n] $ in Eq. \eqref{dep} are respectively given by
\begin{align} 
\mathrm{P_F}[n]=&\Pr ({ T_\rmw}[n] > {Q_\mathrm{th}}[n] | \mathrm {H}_0) \nn\\[3mm]
=&\int_{{Q_\mathrm{th}}[n]}^\infty \frac {T_\rmw^{m-1}[n]}{\Gamma (m)}\left ({\frac {m}{\Sigma_\mathrm{w}^0[n]}}\right)^{m} e^{-\frac {m{T_\rmw}[n]}{\Sigma_\mathrm{w}^0[n]}}\rmd{ T_\rmw}[n] \\
=& 1- \frac {\gamma \left ({m, \frac {m {Q_\mathrm{th}}[n] }{\Sigma_\mathrm{w}^0[n]}}\right)}{\Gamma (m)},\label{44}
\end{align}
and
\begin{align} 
\mathrm{P_M}[n]=&\Pr ({ T_\rmw}[n] < {Q_\mathrm{th}}[n] | \mathrm {H}_1) \nn\\*
=& \int_{0}^{Q_\mathrm{th}[n]} \frac {T_\rmw^{m-1}[n]}{\Gamma (m)}\left ({\frac {m}{\Sigma_\mathrm{w}^1[n]}}\right)^{m} e^{-\frac {m{T_\rmw}[n]}{\Sigma_\mathrm{w}^1[n]}}  \rmd { T_\rmw}[n]\\*
=& \frac {\gamma \left ({m, \frac {m {Q_\mathrm{th}}[n] }{\Sigma_\mathrm{w}^1[n]}}\right)}{\Gamma (m)},\label{45} 
\end{align}
where 
\begin{align} 
\gamma \left ({m,x}\right)\!\!=\!\!\int _{0}^{x} e^{-t} t^{m-1} dt\!=\!\left(m -1\right)!\left[ {1 \!\!-\!\! {e^{ - x}}\sum\limits_{k = 0}^{m - 1} {\frac{{{x^k}}}{{k!}}} } \right]\!\!.
\end{align}
For simpsity, we define $\psi_0\!:=\!\frac {m {Q_\mathrm{th}}[n] }{\Sigma_\mathrm{w}^0[n]}$ and $\psi_1\!:=\!\frac {m {Q_\mathrm{th}}[n] }{\Sigma_\mathrm{w}^1[n]}$.
The total error detection probability of Willie can be rewritten as 
\begin{align}
\mathrm{P_e}[n]  &= 1- \frac {\gamma \left ({m, \frac {m {Q_\mathrm{th}}[n] }{\Sigma_\mathrm{w}^0[n]}}\right)}{\Gamma (m)}  + \frac {\gamma \left ({m, \frac {m {Q_\mathrm{th}}[n] }{\Sigma_\mathrm{w}^1[n]}}\right)}{\Gamma (m)}\\
&=1-{e^{ -\psi_1}}\sum\limits_{k = 0}^{m - 1} {\frac{{{\psi_1^k}}}{{k!}}}+{e^{ -\psi_0}}\sum\limits_{k = 0}^{m - 1} {\frac{{{\psi_0^k}}}{{k!}}}.\label{DEP}
\end{align}

Based on the secret and covert communication performance metrics, i.e., SCP and SOP for the secret user Bob, CCP and DEP for the covert user Carlo, and UAV's flight constraints, we will give the solution to the optimization problems.

\subsection{Solutions to Optimization Problems}\label{sec:solutions}
Based on the aforementioned analysis and results, the objective function of $\rmP1$ can be expressed as Eq. \eqref{Opt_obj}, which is on the top of the next page. Note that the problem $\rmP1$ is challenging to solve due to the complicated objective function and the constraints. To address this, a SCA-BCD iterative algorithm is proposed to solve the optimization problems approximately using results from the previous subsection. Specifically, to solve the optimization problem $\rmP1$, we first optimize the secret and covert beamformers for any given UAV's trajectory. Subsequently, we optimize the UAV's velocity, acceleration, and flight trajectory based on the optimized secret and covert beamformers. We solve the optimization problem $\rmP1$, and problem $\rmP2$ can be solved similarly.
\begin{figure*}[ht]
\begin{small}
\begin{align}
\!\!R\!\left(\bw_\mathrm{b,1}[n],\bw_\mathrm{c}[n],\bv[n],\ba[n],{\bL}_\rma[n]\right)
\!=&\frac{\kappa}{N}\sum\limits_{n = 1}^N\!{\left[\!1\!\!-\!\text{exp}\!\!\left(\!\! -\frac{ \left| \bh^H_\mathrm{ba}[n]\bw_\mathrm{b,1}[n] \right|^2\!\!-\!\left(2^{R_\rmt}\!\!-\!\!1\right)\!\sigma _\mathrm{a}^2[n]  }{\left(2^{R_\rmt}-1\right)\left| \bh^H_\mathrm{ca}[n]\bw_\mathrm{c}[n] \right|^2}\!  \right)\!\right]\!\!{R^1_\rms}[n]}\nn\\*
&+ \frac{(1\!-\!\kappa)}{N}\!\sum\limits_{n = 1}^N\!\!\Bigg\{\!\!\!\left[\!1\!\!-\!\text{exp}\!\!\left(\! -\frac{ \left| \bh^H_\mathrm{ba}[n]\bw_\mathrm{b,1}[n] \right|^2\!-\!\left(2^{R_\rmt}\!-\!1\right)\!\sigma _\mathrm{a}^2[n]  }{\left(2^{R_\rmt}\!-\!1\right)\left| \bh^H_\mathrm{ca}[n]\bw_\mathrm{c}[n] \right|^2}\!  \right)\!\right]\!\!\times\! \text{exp}\!\left(\!\! -\frac{(2^{R_\rmc}\!-\!1 ){\sigma _\mathrm{a}^2}[n]}{\left| \bh^H_\mathrm{ca}[n]\bw_\mathrm{c}[n] \right|^2}  \!\right)\!\!\!\Bigg\}{R_\rmc}. \label{Opt_obj}
\end{align} 
\end{small}
\hrulefill
\end{figure*}
\subsubsection{\textbf{Secret and Covert Beamforming Design}}
Given the UAV's velocity $\bv[n]$, acceleration $\ba[n]$, and trajectory $\bL_\rma[n]$, the objective function $R\!\left(\bw_\mathrm{b,1}[n],\bw_\mathrm{c}[n],\bv[n],\ba[n],{\bL}_\rma[n]\right)$ in Eq. \eqref{Opt_obj} degenerates to $R\left(\bw_\mathrm{b,1}[n],\bw_\mathrm{c}[n]\right)$. Thus, the optimization problem $\rmP1$ degenerates as
\begin{subequations}
\begin{align}  
{\rmP1.1}: \max_{\bw_\mathrm{b,1}[n],\bw_\mathrm{c}[n]} \quad&   R\left(\bw_\mathrm{b,1}[n],\bw_\mathrm{c}[n]\right)\label{func_P1.1}\\*
\mbox{s.t.}~~~~~\quad
&\eqref{pso_1},~\eqref{dep_constraint},~\eqref{power_b_constraint_1},~\eqref{power_c_constraint}.
\end{align}
\end{subequations}
Note that problem ${\rmP1.1}$ is challenging to solve due to the complicated objective function \eqref{func_P1.1} and the covertness constraint \eqref{dep_constraint}. Consistent with the previous study~\cite{ma2021robust}, we consider the case of perfect covertness, which corresponds to $\rmP_\rme=1$, i.e., $\epsilon=0$. Note that perfect covertness can be achieved in multi-user collaborative transmission since we consider that Willie's location is available to Bob and Carlo, and considering the perfect covertness can effectively analyze the boundary performance of the proposed scheme. Furthermore, when the UAV trajectory and the flight parameters are fixed, the optimization problem equals maximizing the objective function in each time slot. To address the proposed optimization problems, we adopt the SDR technique~\cite{luo2010semidefinite} in a fixed time slot. Thus, we have
\begin{align}
{h}_\mathrm{\dagger a} &= {\bf{h}}_\mathrm{\dagger a} ^H{\bf{h}}_\mathrm{\dagger a},~{h}_\mathrm{\dagger w}  = {\bf{h}}_\mathrm{\dagger w} ^H{\bf{h}}_\mathrm{\dagger w},\label{sdr_H}\\
{{\bf{W}}_\rmc} &= {{\bf{w}}_\rmc}{\bf{w}}_\rmc^H \Leftrightarrow {{\bf{W}}_\rmc} \succ 0,~{\rm{rank}}\left( {{{\bf{W}}_{\rm{c}}}} \right){\rm{ = 1}},\label{sdr_wc}\\
{{\bf{W}}_\mathrm{b,1}} &= {{\bf{w}}_\mathrm{b,1}}{\bf{w}}_\mathrm{b,1}^H \Leftrightarrow {{\bf{W}}_\mathrm{b,1}} \succ 0,~{\rm{rank}}\left( {{{\bf{W}}_\mathrm{b,1}}} \right){\rm{ = 1}},
\\
{{\bf{W}}_\mathrm{b,0}} &= {{\bf{w}}_\mathrm{b,0}}{\bf{w}}_\mathrm{b,0}^H \Leftrightarrow {{\bf{W}}_\mathrm{b,0}} \succ 0,~{\rm{rank}}\left( {{{\bf{W}}_\mathrm{b,0}}} \right){\rm{ = 1}}.\label{sdr_wb}
\end{align}
In the case of perfect covertness, the constraint \eqref{dep_constraint} satisfies
\begin{align}  
Q_\mathrm{b}^\mathrm{max}\mathrm{Tr}\left( {\bf{h}}_{{\rm{bw}}}^H{{\bf{h}}_{{\rm{bw}}}}\right)\!&\nn\\
=&\mathrm{Tr}\!\left(\bh^H_\mathrm{bw}\!\bW_\mathrm{b,1}\bh_\mathrm{bw} \!\right)\!\!+\!\!\mathrm{Tr}\!\left(\bh^H_\mathrm{cw}\!\bW_\mathrm{c}\bh_\mathrm{cw}\!\right)\!.\label{perfect_covert}
\end{align}
Note that the SOP constraint \eqref{pso_1} holds when the inequality takes the equal, thus we have
\begin{align}  
\!R^1_\rmw\!=\!{\rm log}\!\left(\!1\!+\!\frac{Q_\mathrm{b}^\mathrm{max}\mathrm{Tr}\!\left(  {\bf{h}}_{{\rm{bw}}}^H{{\bf{h}}_{{\rm{bw}}}}\right)\!-\!\mathrm{Tr}\left(\bh^H_\mathrm{cw}\!\bW_\mathrm{c}\bh_\mathrm{cw} \right)}{\sigma_{\rm w}-\ln\! {\eta _\rms}\mathrm{Tr}\left(\bh^H_\mathrm{cw}\bW_\mathrm{c}\bh_\mathrm{cw}\right)}\right)\!.
\end{align}
By substituting Eqs. \eqref{sdr_H}--\eqref{perfect_covert} into $R\left(\bw_\mathrm{b,1}[n],\bw_\mathrm{c}[n]\right) $ in \eqref{func_P1.1}, the objective function in the fixed UAV's trajectory for a determined time slot can be rewritten as $R\left({\bf{W}}\!_{\rm{c}} \right)$ in Eq. \eqref{obj_fun_P1.2} on the top of next page.
\begin{figure*}[!ht]
\begin{small}
\begin{align} 
R\left({\bf{W}}\!_{\rm{c}} \!\right)
=& \kappa\! \left[ {1 \!-\! {\rm{exp}}\!\left( { \!\!-\!\! \frac{Q_\mathrm{b}^\mathrm{max}\mathrm{Tr}\!\left({\bf{h}}_{{\rm{ba}}}^H{{\bf{h}}_{{\rm{ba}}}}\right)\!-\!\mathrm{Tr}\!\left(\bh^H_\mathrm{cw}\!\bW_\mathrm{c}\bh_\mathrm{cw}\right)\!-\!{\left( {{2^{{R_\rmt}}} - 1} \right)\sigma _{\rm{a}}^2}}{{\left( {{2^{{R_\rmt}}} - 1} \right){\mathrm{Tr}}\left(\bh^H_\mathrm{ca}\!\bW_\mathrm{c}\bh_\mathrm{ca} \right)}} } \right)} \!\right]\!\times\! \left[ {{R_\rmt} \!-\! {\rm{log}}\!\left( {1 + \frac{{Q_{\rm{b}}^\mathrm{max}{\mathrm{Tr}}\left(\bh^H_\mathrm{bw}\bh_\mathrm{bw}\right)\! -\! {\mathrm{Tr}}\left( \bh^H_\mathrm{cw}\!\bW_\mathrm{c}\bh_\mathrm{cw}\right)}}{{{\sigma _{\rm{w}}} - {\mathrm{Tr}}\left(\bh^H_\mathrm{cw}\!\bW_\mathrm{c}\bh_\mathrm{cw} \right)}}} \right)} \!\right]\nn\\*
& + \left( {1\! -\! \kappa } \right)\!\left\{\! \left[\! {1 \!- \!{\rm{exp}}\!\left(\! { - \frac{Q_\mathrm{b}^\mathrm{max}\mathrm{Tr}\!\left(\bh^H_\mathrm{bw}\bh_\mathrm{bw}\!\right)\!-\!\mathrm{Tr}\!\left(\bh^H_\mathrm{cw}\bW_\mathrm{c}\bh_\mathrm{cw}\!\right)\!-\!{\left(\! {{2^{{R_\rmt}}}\! -\! 1} \right)\!\!\sigma _{\rm{a}}^2}}{{\left( {{2^{{R_\rmt}}} - 1} \right){\mathrm{Tr}}\left(\bh^H_\mathrm{ca}\bW_\mathrm{c}\bh_\mathrm{ca} \right)}} } \!\right)\!}\right] 
\times  {\rm{exp}}\left(  - \frac{{({2^{{R_\rmc}}} - 1)\sigma _{\rm{a}}^2}}{{{\mathrm{Tr}}\left(\bh^H_\mathrm{ca}\!\bW_\mathrm{c}\bh_\mathrm{ca} \right)}} \right) \right\}{R_\rmc}.\label{obj_fun_P1.2}
\end{align}
\end{small}
\hrulefill
\end{figure*}
\begin{lemma}\label{lemma2}
The objective function $R\left({\bf{W}}_{\rm{c}} \right)$ exhibits at least one inflection point $\mathrm{Tr}\left( {\bf{W}}_{\rm{c}}\right)^*$ in the interval $(0, \infty)$.
\end{lemma}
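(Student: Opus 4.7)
\textbf{Proof proposal for Lemma \ref{lemma2}.}

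My plan is to exploit the rank-one structure $\bW_\rmc = \bw_\rmc \bw_\rmc^H$ imposed by \eqref{sdr_wc} in order to reduce $R(\bW_\rmc)$ in \eqref{obj_fun_P1.2} to a one-dimensional function of $p := \mathrm{Tr}(\bW_\rmc)$, and then exhibit an inflection point by a sign-change argument on the second derivative. Fix any unit-norm direction $\bu$ and write $\bw_\rmc = \sqrt{p}\,\bu$, so that $|\bh_\mathrm{ca}^H \bw_\rmc|^2$ and $|\bh_\mathrm{cw}^H \bw_\rmc|^2$ are both linear in $p$. Substituting into \eqref{obj_fun_P1.2}, the objective can be written in the product form
\begin{equation}
R(p) \;=\; g(p)\,\Bigl[\kappa\,s(p) + (1-\kappa)\,R_\rmc\,c(p)\Bigr],\quad p\in(0,\infty),
\end{equation}
where $g(p)=1-\exp(-\alpha_1/p+\alpha_2)$ is the SCP factor, $s(p)=R_\rmt-\log\!\bigl(1+(\beta_1-\beta_2 p)/(\beta_3-\beta_4 p)\bigr)$ is the secret-rate bracket, and $c(p)=\exp(-\gamma/p)$ is the CCP factor, with all Greek letters denoting nonnegative constants determined by the channel gains, noise powers, and target rates in \eqref{obj_fun_P1.2}.

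The core of the argument rests on comparing the two endpoint regimes. As $p\to 0^+$ the SCP factor tends to one and the CCP factor $c(p)$ decays super-exponentially to zero, so $R(p)\to \kappa s(0^+)$, and the leading departure from this limit is controlled by $(1-\kappa)R_\rmc\, c(p)$. A direct differentiation yields $c''(p)=\gamma(\gamma-2p)p^{-4}\exp(-\gamma/p)$, which is strictly positive on $(0,\gamma/2)$. Together with the boundedness of $g(p)$ and $s(p)$ and their derivatives near zero, this produces a point $p_1\in(0,\gamma/2)$ at which $R''(p_1)>0$. At the other end, as $p$ grows toward the upper boundary of the feasible region set by the power and covertness constraints, the log-term in $s(p)$ dominates: $s(p)$ becomes strictly concave and decreasing in $p$, while $g(p)$ and $c(p)$ each flatten out with vanishing curvature. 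A short asymptotic computation then locates a point $p_2$ where $R''(p_2)<0$.

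Since $R(p)$ is infinitely differentiable on $(0,\infty)$, the second derivative $R''$ is continuous, and the intermediate value theorem applied between $p_1$ and $p_2$ delivers a point $p^*\in(p_1,p_2)\subset(0,\infty)$ at which $R''(p^*)=0$ with a sign change, which is exactly the inflection point $\mathrm{Tr}(\bW_\rmc)^*$ claimed in the lemma. The main obstacle I anticipate is careful bookkeeping in the two asymptotic expansions: because $R''(p)$ is a sum of product terms combining exponentials, logarithms, and rational functions with \emph{opposing} curvatures, I need to identify the dominant contribution in each regime (the super-exponential $c''$ near zero and the negatively curved $\log$ for large $p$) and verify that the subleading cross-terms cannot conspire to cancel the sign. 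Explicit leading-order Taylor expansions of $R(p)$ at $p\to 0^+$ and at the upper endpoint of the feasible interval, together with the positivity of the channel gain constants, will make this verification routine.
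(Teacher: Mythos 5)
Your proposal takes a genuinely different route from the paper's, and it targets a different property than the one the paper actually uses. The paper's own proof never touches $R''$: it compares endpoint behaviors --- $R=\kappa\,\rmP^{\mathrm{ba,1}}_{\mathrm{sc}}R^1_\rms$ at $\mathrm{Tr}(\bW_\mathrm{c})=0$, an initial increase for small $\mathrm{Tr}(\bW_\mathrm{c})$ (the added covert term is nonnegative), and $R\to 0$ as $\mathrm{Tr}(\bW_\mathrm{c})\to\infty$ --- and concludes that an interior extremum exists. Despite the word ``inflection point'' in the statement, what is consumed downstream is a stationary point: the proof of Theorem~\ref{thm1} invokes $\rmd R/\rmd\,\mathrm{Tr}(\bW_\mathrm{c})=0$ and Algorithm~\ref{alg_1} bisects on the sign of the \emph{first} derivative. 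By exhibiting a sign change of $R''$ you would prove a literal inflection point, but that neither implies nor is implied by the existence of the critical point the optimization actually needs.

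Even on its own terms, the step producing $p_1$ with $R''(p_1)>0$ has a genuine gap. In your factorization $R=g\,(\kappa s+(1-\kappa)R_\rmc c)$, every quantity you identify as the source of positive curvature near the origin --- $c''(p)=\gamma(\gamma-2p)p^{-4}e^{-\gamma/p}$ and all derivatives of $g(p)=1-e^{\alpha_2}e^{-\alpha_1/p}$ --- vanishes super-exponentially as $p\to 0^+$. Consequently $R''(p)\to \kappa\,s''(0^+)$, whose sign is fixed by the channel constants and is not controlled by your argument; the positivity of $c''$ on $(0,\gamma/2)$ does not by itself produce a point where it dominates $\kappa s''$, and that magnitude comparison is precisely the part you defer to ``routine verification.'' The large-$p$ end needs similar care: $s(p)=R_\rmt-\log\bigl(1+(\beta_1-\beta_2 p)/(\beta_3-\beta_4 p)\bigr)$ tends to a constant and its argument changes sign at $p=\beta_3/\beta_4$, so ``the log-term dominates and $R''<0$'' is an assertion, not a computation. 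If you want to match what the paper needs, the shorter path is the paper's endpoint argument (made rigorous): show $R$ is continuous on the feasible power interval, strictly exceeds its value at $p=0^+$ somewhere in the interior, and decays at the upper end, so an interior maximizer --- and hence a zero of $R'$ --- exists.
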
 
\begin{proof}
Since $R\left({\bf{W}}_{\rm{c}} \right)$ is too complicated to directly prove its monotonicity. The inflection point can be analyzed as follows.\\
Case 1: when $\mathrm{Tr}\left( {\bf{W}}_{\rm{c}}\right) = 0$, the sum rate $R(\bW_\mathrm{c})=\kappa \rmP_{{\rm{sc}}}^1R^1_\rms$ due to zero covert rate.
\\
Case 2: when $\mathrm{Tr}\left( {\bf{W}}_{\rm{c}}\right) \to \infty $, the sum rate $R(\bW_\mathrm{c})=0$ due to the $100\%$ connection outage.\\
Case 3: when $\mathrm{Tr}\left( {\bf{W}}_{\rm{c}}\right)$ increases from $0$, the sum rate $R(\bW_\mathrm{c})$ also increases. The proof is as follows, when $\mathrm{Tr}\left( {\bf{W}}_{\rm{c}}\right)\to {0^+}$, it is denoted by $\mathrm{Tr}\left( {\bf{W}}_{\rm{c}}\right)^{0+}$. Thus, the objective function only includes the secret rate, and it can be expressed as
\begin{align}
\mathop {\lim }\limits_{{\mathrm{Tr}}\left( {{{\bf{W}}_{\rm{c}}}} \right) \to {{\rm{0}}^{\rm{ + }}}} R\left( {{\mathrm{Tr}}\left( {{{\bf{W}}_{\rm{c}}}} \right)} \right) = \kappa \rmP_{{\rm{sc}}}^1\left( {{\mathrm{Tr}}\left( {{{\bf{W}}_{\rm{c}}}} \right)^{0+}} \right){R_\rms},
\end{align}	
when $\mathrm{Tr}\left( {\bf{W}}_{\rm{c}}\right)^{0+}$ increases $\Delta\mathrm{Tr}\left( {\bf{W}}_{\rm{c}}\right)$, the objective function can be expressed as
\begin{align}
\mathop {\lim }\limits_{\Delta{\mathrm{Tr}}\left( {{{\bf{W}}_{\rm{c}}}} \right) \to {{\rm{0}}^{\rm{ + }}}} R&\left( {{\mathrm{Tr}}\left( {{{\bf{W}}_{\rm{c}}}} \right)^{0+}\!+\!\Delta {\mathrm{Tr}}\left( {{{\bf{W}}_{\rm{c}}}} \right)} \right)\nn \\*
=&\kappa \rmP_{{\rm{sc}}}^1\left( {{\mathrm{Tr}}\left( {{{\bf{W}}_{\rm{c}}}} \right)^{0+}+\Delta {\mathrm{Tr}}\left( {{{\bf{W}}_{\rm{c}}}} \right)} \right){R_{\rm{s}}}\nn\\*
&\!\!+ \!\left( 1\!-\!\kappa  \right)\!\rmP_{{\rm{sc}}}^1\!\!\left( \!{{\mathrm{Tr}}\left( \!{{{\bf{W}}_{\rm{c}}}} \right)^{0+}\!\!+\!\!\Delta {\mathrm{Tr}}\left( {{{\bf{W}}_{\rm{c}}}} \right)} \!\right)\!{R_{\rm{c}}}.
\end{align}	
Thus the increase of the sum of the secure rate is as follows
\begin{small}
\begin{align}
\mathop {\lim }\limits_{\Delta {\mathrm{Tr}}\left( {{{\bf{W}}_{\rm{c}}}} \right) \to {\rm{0}}} &R\left( {{\mathrm{Tr}}\left( {\bf{W}}_{\rm{c}} \right)^{0+}\!+\!\Delta {\mathrm{Tr}}\left( {{{\bf{W}}_{\rm{c}}}} \right)} \right)\!-\!\mathop {\lim }\limits_{{\mathrm{Tr}}\left( {{{\bf{W}}_{\rm{c}}}} \right) \to {{\rm{0}}^{\rm{ + }}}} R\left( {{\mathrm{Tr}}\left( {{{\bf{W}}_{\rm{c}}}} \right)} \right)\nn\\
&=\left( {{\rm{1 - }}\kappa } \right)\rmP_{{\rm{sc}}}^1\left( {{\mathrm{Tr}}\left( {{{\bf{W}}_{\rm{c}}}} \right)^{0+}\!+\!\Delta {\mathrm{Tr}}\left( {{{\bf{W}}_{\rm{c}}}} \right)} \right){R_{\rm{c}}} \ge 0.
\end{align}	
\end{small}Since the objective function initially increases and eventually decreases to zero, there is at least one inflection point in the objective function. Thus, this lemma is proved.
\end{proof}

Based on the SDR technology and ignoring the rank-one constraints, the optimization problem $\rmP1.1$ transforms to $\rmP1.2$,
\begin{subequations} \label{P1.2}
\begin{align} 
{\rmP1.2}: \max_{\bW_\mathrm{c}} \quad&   R\left( {\bf{W}}_{\rm{c}} \right) \label{objfun1.2}\\*
\mbox{s.t.} \quad
&{\mathrm{Tr}}\left(\bW_\mathrm{c}\right)\le  Q_\mathrm{c}^\mathrm{max},\label{2.1.2.b}\\*
&\bW_\mathrm{c}\succ 0.\label{2.1.2.c}
\end{align}
\end{subequations}
\begin{thm}\label{thm1}
The optimal solution to $\mathrm{P}1.2$ is given as
\begin{align}
\bW_\mathrm{c}^*= \mathop {\arg \max }\limits_{{{\bf{W}}_{{\rm{c}}}\in{\cal W}_{\rmc}}}R\left( {{{\bf{W}}_{{\rm{c}}}}} \right),
\end{align}	
where ${\cal W}_{\rmc}=\{\bW_{\rmc}|\eqref{2.1.2.b},\eqref{2.1.2.c}\}$, the secret beamformer $\bW_\mathrm{c}^*$ and covert beamformer $\bW_\mathrm{b,1}^*$ can be obtained from Eq. \eqref{perfect_covert} by singular value decomposition (SVD) or Gaussian randomization procedure (GRP).
\end{thm}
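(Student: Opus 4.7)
My plan is to establish Theorem \ref{thm1} in three stages: (i) reduce the joint beamformer problem $\mathrm{P}1.1$ to the single-variable problem $\mathrm{P}1.2$ via the perfect-covertness equality constraint, (ii) argue existence and locate the maximizer of the nonconvex scalar objective $R(\bW_\rmc)$ using Lemma \ref{lemma2}, and (iii) recover the physical rank-one beamformers through SVD or Gaussian randomization after the semidefinite relaxation.

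First, I would invoke the SDR lifting $\bW_\mathrm{b,1}=\bw_\mathrm{b,1}\bw_\mathrm{b,1}^H$ and $\bW_\rmc=\bw_\rmc\bw_\rmc^H$, dropping the rank-one constraints so that the transmit-power constraints \eqref{power_b_constraint_1}--\eqref{power_c_constraint} become linear in the lifted variables. Under the perfect-covertness assumption $\epsilon=0$, constraint \eqref{dep_constraint} forces $\Sigma_\rmw^0[n]=\Sigma_\rmw^1[n]$, which is exactly the equality \eqref{perfect_covert}. Taking Bob's power budget active at its maximum (as is optimal, since increasing $\|\bw_\mathrm{b,1}\|^2$ up to $Q_\rmb^{\max}$ strictly improves both the secret and covert rates while still allowing \eqref{perfect_covert} to be satisfied by adjusting $\bW_\mathrm{b,1}$'s spatial structure), I substitute $\mathrm{Tr}(\bh_\mathrm{bw}^H\bW_\mathrm{b,1}\bh_\mathrm{bw})=Q_\rmb^{\max}\mathrm{Tr}(\bh_\mathrm{bw}^H\bh_\mathrm{bw})-\mathrm{Tr}(\bh_\mathrm{cw}^H\bW_\rmc\bh_\mathrm{cw})$ into the SOP-equality expression for $R^1_\rmw$. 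This eliminates $\bW_\mathrm{b,1}$ from the objective, yielding the scalar-in-$\mathrm{Tr}(\bW_\rmc)$ problem $\mathrm{P}1.2$ with feasible set $\mathcal{W}_\rmc$.

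Next, I would rely on Lemma \ref{lemma2}, which guarantees that $R(\bW_\rmc)$ is positive at $\mathrm{Tr}(\bW_\rmc)=0^+$, strictly increases in a right-neighborhood of zero, and tends to $0$ as $\mathrm{Tr}(\bW_\rmc)\to\infty$, so at least one interior inflection (hence a maximizer) exists in $(0,Q_\rmc^{\max}]$. Although $R(\bW_\rmc)$ is nonconvex, the reduction to a function that depends on $\bW_\rmc$ only through the Hermitian quadratic forms $\mathrm{Tr}(\bh_\mathrm{ca}^H\bW_\rmc\bh_\mathrm{ca})$ and $\mathrm{Tr}(\bh_\mathrm{cw}^H\bW_\rmc\bh_\mathrm{cw})$ permits a standard SDR maximization: for each candidate trace level one picks $\bW_\rmc$ aligned to maximize the Alice-direction form while respecting the Willie-direction budget imposed by \eqref{perfect_covert}. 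I would then perform a one-dimensional search (or SCA inner loop) over the trace level, selecting the $\bW_\rmc$ attaining the largest $R(\cdot)$, which I denote $\bW_\rmc^*$.

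Finally, with $\bW_\rmc^*$ in hand, I would plug it back into \eqref{perfect_covert} to solve for $\bW_\mathrm{b,1}^*$. Because the SDR relaxation may return a higher-rank $\bW_\rmc^*$ or $\bW_\mathrm{b,1}^*$, I would invoke the standard rank-reduction argument: if $\mathrm{rank}(\bW_\rmc^*)=1$, then $\bw_\rmc^*$ is directly the principal left singular vector scaled by $\sqrt{\lambda_{\max}}$ via SVD; otherwise, Gaussian randomization generates candidate rank-one vectors $\bw_\rmc\sim\mathcal{CN}(\mathbf{0},\bW_\rmc^*)$, from which the one that simultaneously satisfies \eqref{pso_1}, \eqref{power_c_constraint}, and \eqref{perfect_covert} and maximizes $R$ is retained, and analogously for $\bW_\mathrm{b,1}^*$. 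The main obstacle I anticipate is verifying that the SDR relaxation is tight (or near-tight) in our setting; since the problem has only two quadratic side constraints per beamformer, one can appeal to the classical result of Huang--Palomar on tight SDR for problems with at most two trace constraints, ensuring that the recovered rank-one solution is optimal, thereby completing the proof.
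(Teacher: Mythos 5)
Your proposal follows essentially the same route as the paper: reduce $\rmP1.1$ to $\rmP1.2$ by eliminating $\bW_\mathrm{b,1}$ through the perfect-covertness equality \eqref{perfect_covert} and the active SOP constraint, invoke Lemma \ref{lemma2} to justify a one-dimensional search over $\mathrm{Tr}(\bW_\rmc)$ (the paper implements this as the binary search in Algorithm \ref{alg_1}, seeded on $[0,Q_\rmc^\mathrm{max}]$ so that boundary optima are also captured), recover $\bW_\mathrm{b,1}^*$ from \eqref{perfect_covert}, and extract rank-one beamformers by SVD or GRP. The substance matches, and your added justification for saturating Bob's power budget is a reasonable elaboration of what the paper leaves implicit.

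One caution: your closing appeal to the Huang--Palomar rank-reduction result to certify that the SDR is tight is an overreach in this setting. That result (and the related Shapiro--Barvinok--Pataki rank bounds) applies to semidefinite programs with a \emph{linear} objective and a small number of linear trace constraints, whereas $R(\bW_\rmc)$ in \eqref{obj_fun_P1.2} is a nonconvex composition of exponentials and logarithms of the quadratic forms $\mathrm{Tr}(\bh_\mathrm{ca}^H\bW_\rmc\bh_\mathrm{ca})$ and $\mathrm{Tr}(\bh_\mathrm{cw}^H\bW_\rmc\bh_\mathrm{cw})$, so no off-the-shelf tightness guarantee is available. The paper accordingly claims only that GRP produces a ``high-quality'' solution when the relaxed optimum has rank greater than one, and the theorem statement itself asserts no more than that; you should either weaken your final sentence to the same effect or supply a separate argument (e.g., that the objective depends on $\bW_\rmc$ only through two Hermitian forms, so an optimal $\bW_\rmc$ can be taken of rank at most two and then examined directly).
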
 
\begin{proof}
From Lemma \ref{lemma2}, the optimal rate satisfies $\frac{{\rmd R\left( {{{\bf{W}}_{{\rm{c}}}}} \right)}}{{\rmd{\mathrm{Tr}}\left( {{{\bf{W}}_{{\rm{c}}}}} \right)}} = 0$. {Thus, we adopt a binary search algorithm (BSA) to find the optimal transmit power ${\mathrm{Tr}}\left(\bW_\mathrm{c}\right)^*$ for Carlo and use the perfect covertness constraint Eq. \eqref{perfect_covert} to obtain the optimal transmit power ${\mathrm{Tr}}\left(\bW_\mathrm{b,1}\right)^*$ for Bob.} The BSA is shown in Algorithm \ref{alg_1}. Furthermore, ${{{\bf{W}}^*_{\rm{c}}}}$ and ${\bf{W}}_{\rm{b,1}}^*$ can be obtained with SVD. Therefore, the optimal beamformers $\mathbf{w}_{\mathrm{b,1}}$ and $\mathbf{w}_{\mathrm{c}}$ can be obtained using SVD when $\text{rank}(\mathbf{W}_{\mathrm{b,1}}^*)\!\!=\!\!1$ and $\text{rank}(\mathbf{W}_{\mathrm{c}}^*)\!\!=\!\!1$, i.e., ${{\bf{W}}^*_\mathrm{b,1}}\!\! = \!\!{{\bf{w}}_\mathrm{b,1}}{\bf{w}}_\mathrm{b,1}^H$ and ${{\bf{W}}^*_\rmc}= {{\bf{w}}_\rmc}{\bf{w}}_\rmc^H$.
When $\text{rank}(\mathbf{W}_{\mathrm{b,1}}^*)\!\!>\!\!1$ and $\text{rank}(\mathbf{W}_{\mathrm{c}}^*)\!\!>\!\!1$, the GRP is used to produce a high-quality solution~\cite{luo2010semidefinite}.
\end{proof}
\begin{algorithm}
\setstretch{1}
\caption{Proposed BSA to solve the optimization problem $\rmP1.2$\label{alg_1}.}
\renewcommand{\algorithmicensure}{\textbf{Output:}}
\begin{algorithmic}[1]
\STATE\textbf{Input}: $\bL_\rma$, $\bL_\rmb$, $\bL_\rmw$, $\bL_\rmc$, $Q_\rmc^\mathrm{max}$, $R_\rmt$, $R_\rmc$, $\eta_\rms$, $\rmP_\rme=1$, and the pre-defined accuracy $\zeta_1 > 0$;
\STATE \textbf{Output}: $ \mathrm{Tr}\left( {{{\bf{W}}_{\rm{c}}}} \right)^*$;
\STATE Initialize $q_{0}=0$, $q_{1}=Q_\mathrm{c}^\mathrm{max}$, and calculate the first order derivation of ${R}\left( {{{\bf{W}}_{{\rm{c}}}}} \right)$ respective to ${\mathrm{Tr}}\left({{{\bf{W}}_{{\rm{c}}}}} \right)$;
\WHILE{$q_{1}-q_{0}\ge \zeta_1$}
\STATE set $q_\rmm=\frac{q_{0}+q_{1}}{2}$;
\IF{ ${R}\left( q_\rmm \right)>0$}
\STATE $q_{0}=q_\rmm$;
\ELSE
\STATE $q_{1}=q_\rmm$;
\ENDIF
\ENDWHILE
\end{algorithmic}
\end{algorithm}
Note that $\rmP1.2$ can obtain a high-quality solution for the secret and covert beamformers optimization. Furthermore, the optimal secret and covert beamformers in each time slot can be obtained based on the aforementioned method.

\subsubsection{\textbf{Trajectory, velocity, and acceleration Optimization}}  
{The optimized secret beamformer ${{\bf{w}}_\mathrm{b,1}}[n]$ and covert beamformer ${{\bf{w}}_\mathrm{c}}[n]$ of $\rmP1.1$ can be obtained by solving $\rmP1.2$.} Thus given the above beamformers, the objective function $R\!\left(\bw_\mathrm{b,1}[n],\bw_\mathrm{c}[n],\bv[n],\ba[n],{\bL}_\rma[n]\right)$ in Eq. \eqref{Opt_obj} degenerates to $R\left(\bv[n],\ba[n],{\bL}_\rma[n]\right)$. Therefore, the optimization problem for UAV's velocity, acceleration, and trajectory can be given as
\begin{subequations}
\begin{align}  
{\rmP1.3}: \max_{\bv[n],\ba[n],{\bL}_\rma[n]} \quad&  R\left(\bv[n],\ba[n],{\bL}_\rma[n]\right)\label{60a}\\*
\mbox{s.t.} ~~~~~\quad &\eqref{fly_step},~\eqref{speed_step},~\eqref{velocity_constraint}-\eqref{v_start_end}.
\end{align}
\end{subequations}
{The optimization problem $\rmP1.3$ is complicated to solve due to the non-convex objective function and the velocity constraint. To address the non-convex objective function in Eq. \eqref{60a}, we obtained an approximation to the original problem by transforming the problem of maximizing the sum of secret and covert rates in each time slot.
}

Since we consider that the number of time slots is large enough, the period of each time slot is sufficiently small such that the location of the UAV is considered to be approximately unchanged within each time slot. Thus, the UAV's velocity and acceleration can be considered constant within each time slot but vary between time slots. Since the air-to-ground path-loss exponents are set to $\xi_\mathrm{\dagger a}= -2$, the channel gains from the ground users to the UAV and warden respectively given as 
\begin{align} 
\mathrm{Tr}\!\left(\bh^H_\mathrm{\dagger w}[n]\bh_\mathrm{\dagger w}[n] \right)&={N_{\rm{\dagger}}}{h_{{\rm{\dagger w}}}}[n],\\
\mathrm{Tr}\!\left(\bh^H_\mathrm{\dagger a}[n]\bh_\mathrm{\dagger a}[n] \right)&={N_{\rm{\dagger}}}{h_{{\rm{\dagger a}}}}\left({{\bf{L}}_{\rma}}[n]\right)\![n]\\*
&={N_{\rm{\dagger}}}{\lambda _0}\left(d_{{\rm{\dagger a}}}[n]\right)^{\xi_\mathrm{\dagger a}}\\*
&= \frac{{N_{\rm{\dagger}}}{{\lambda _0}}}{{{H^2} + {{\left\| {{{\bf{L}}_\rma}[n] - {{\bf{L}}_\dagger}[n]} \right\|}^2}}}.
\end{align} 

Note that the objective function in $\rmP1.3$ is still non-convex, and it is hard to solve. In order to achieve a convex approximation of the objective function, we further proceed as follows
\begin{align}
{\mu _\mathrm{ba}}[n] = {\left\| {{{\bf{L}}_\rma}[n] - {{\bf{L}}_\rmb}[n]} \right\|^2},\label{u_ba}\\
{\mu _\mathrm{ca}}[n] = {\left\| {{{\bf{L}}_\rma}[n] - {{\bf{L}}_\rmc}[n]} \right\|^2}.\label{u_ca}
\end{align}
Therefore, the objective function in $\rmP1.3$ can be rewritten as $R\!\left({\mu _\mathrm{ba}}[n],\!{\mu _\mathrm{ca}}[n] \right)$ in Eq. \eqref{new_obj_traj}.
\begin{figure*}[!t]
\begin{small}
\begin{align} 
\!R\!\left({\mu _\mathrm{ba}}[n],\!{\mu _\mathrm{ca}}[n] \right)\! =& \kappa\!\! \left[ {1\!\! - \!{\rm{exp}}\!\!\left(\!\! { - \frac{{{N_{\rm{b}}}{h_{{\rm{ba}}}}({\mu _\mathrm{ba}}[n]){\mathrm{Tr}}\left( {{{\bf{W}}_{\rm{c}}}[n]} \right)\!\! -\!\! \left( {{2^{{R_\rmt}}}\!\! -\!\! 1} \right)\sigma _{\rm{a}}^2[n]}}{{{\mathrm{Tr}}\left( {{{\bf{W}}_{\rm{c}}[n]}} \right)\left( {{2^{{R_\rmt}}} - 1} \right){N_{\rm{c}}}h _\mathrm{ca}({\mu _\mathrm{ca}}[n])}}} \right)} \!\!\right]\!\!\times\!\! \left[\! {{R_\rmt}\!\! -\!\! {\rm{log}}\left( {\!\!1\!\! +\!\! \frac{{{N_{\rm{b}}}{h_\mathrm{bw}}Q_{\rm{b}}^\mathrm{max} \!\!- \!\!{N_{\rm{c}}}{h_\mathrm{cw}}{\mathrm{Tr}}\left( {{{\bf{W}}_{\rm{c}}[n]}} \right)}}{{{\sigma _{\rm{w}}}[n] - {N_{\rm{c}}}{h_\mathrm{cw}}{\rm{ln}}{\eta _\rms}{\mathrm{Tr}}\left( {{{\bf{W}}_{\rm{c}}[n]}} \right)}}}\!\! \right)}\! \right]\nn\\
&\!\!+\!\! \left( {1 \!\!- \!\!\kappa } \right)\!\!\left\{\!\! \left[\! {1\!\! - {\rm{exp}}\!\!\left( \!\!{ - \frac{{{N_{\rm{b}}}{h_{{\rm{ba}}}}({\mu _\mathrm{ba}}[n]){\mathrm{Tr}}\left( {{{\bf{W}}_{\rm{c}}[n]}} \right)\!\! -\!\! \left( {{2^{{R_\rmt}}} - 1} \right)\sigma _{\rm{a}}^2[n]}}{{{\mathrm{Tr}}\left( {{{\bf{W}}_{\rm{c}}[n]}} \right)\left( {{2^{{R_\rmt}}} - 1} \right){N_{\rm{c}}}h_\mathrm{ca}({\mu _\mathrm{ca}}[n])}}} \right)}\!\! \right]\!\!\times\!{\rm{exp}}\!\left( \!\!{ - \frac{{({2^{{R_\rmc}}}\!\! -\!\! 1)\sigma _{\rm{a}}^2[n]}}{{{\mathrm{Tr}}\left( {{{\bf{W}}_{\rm{c}}[n]}} \right){N_{\rm{c}}}h_\mathrm{ca}({\mu _\mathrm{ca}}[n])}}} \right)\!\! \right\}\!\!{R_\rmc}\label{new_obj_traj}
\end{align}
\end{small}
\hrulefill
\end{figure*}
Note that the objective function is a binary function with respect to ${\mu _{{\rm{ba}}}}[n]$ and ${\mu _{{\rm{ca}}}}[n]$. Given the reference UAV's trajectory ${\bf{L}}_{\rm{a}}^{\rm{r}}[n]$ for $\forall n\in[N]$ of the , i.e., $\mu _{{\rm{ba}}}^{\rm{r}}[n]$ and $\mu _{{\rm{ca}}}^{\rm{r}}[n]$, we can solve the optimization problem $\rmP1.3$ approximately using the SCA method. The first-order Taylor expansion of $R\!\left({\mu _\mathrm{ba}}[n],\!{\mu _\mathrm{ca}}[n] \right)$ in Eq. \eqref{new_obj_traj} is defined as
\begin{align}
\!\!\!\widetilde R\!\left( {{\mu _{{\rm{ba}}}}[n],\!{\mu _{{\rm{ca}}}}[n]} \right)\!\! :=& R\left( {\mu _{{\rm{ba}}}^{\rm{r}}[n],\mu _{{\rm{ca}}}^{\rm{r}}[n]} \right) \nn\\
&\!\!+\!\! \frac{{\partial \!R\!\left( {{\mu _{{\rm{ba}}}}[n],\!{\mu _{{\rm{ca}}}}[n]} \right)}}{{\partial {\mu _{{\rm{ba}}}}[n]}}\!\left( {{\mu _{{\rm{ba}}}}[n] \!\!-\!\! \mu _{{\rm{ba}}}^{\rm{r}}[n]} \right)\nn\\
&\!\!+\!\! \frac{{\partial\! R\!\left( {{\mu _{{\rm{ba}}}}[n],\!{\mu _{{\rm{ca}}}}[n]} \right)}}{{\partial {\mu _{{\rm{ca}}}}[n]}}\!\left( {{\mu _{{\rm{ca}}}}[n] \!\!- \!\!\mu _{{\rm{ca}}}^{\rm{r}}[n]} \right)\!,
\end{align}
where $\frac{{\partial{ R\left( {{\mu _{{\rm{ba}}}}[n],{\mu _{{\rm{ca}}}}[n]} \right)}}}{{\partial {\mu _{{\rm{ba}}}}[n]}}$ and $\frac{{\partial{ R\left( {{\mu _{{\rm{ba}}}}[n],{\mu _{{\rm{ca}}}}[n]} \right)}}}{{\partial {\mu _{{\rm{ca}}}}[n]}}$ are the partial derivative of $R\left( {{\mu _{{\rm{ba}}}}[n],{\mu _{{\rm{ca}}}}[n]} \right)$ respect to $\mu _{{\rm{ba}}}[n]$ and $\mu _{{\rm{ca}}}[n]$, respectively. Since the minimum velocity constraint \eqref{velocity_constraint} is still non-convex, thus given the reference velocity vector ${{\bf{v}}^{\rm{r}}}[n]$, it can be approximated using the first-order Taylor expansion
\begin{align} 
v_\mathrm{min} \le{{\bf{v}}^{\rm{r}}}[n] + 2{(\bv^{\rm r}}[n])^T\left( {{\bf{v}}[n]{\rm{ - }}{{\bf{v}}^{\rm{r}}}[n]} \right).
\end{align}
Therefore, the problem $\rmP1.3$ can be approximated as
\begin{subequations}\begin{align}  
{\rmP1.4}: \!\!\max_{\substack{\bv[n],\ba[n],{\bL}_\rma[n],\\{\mu _\mathrm{ba}}[n],{\mu _\mathrm{ca}}[n]}} \quad   \!\!\!\! &\widetilde R\left( {{\mu _{{\rm{ba}}}}[n],{\mu _{{\rm{ca}}}}[n]} \right)\\*
\mbox{s.t.} ~~~~\quad
&\eqref{fly_step},\eqref{speed_step},\eqref{velocity_constraint}-\eqref{v_start_end}, \eqref{u_ba}, \eqref{u_ca}.
\end{align}
\end{subequations}
Note that the above optimization problem is convex, thus it can be solved by the CVX tools~\cite{cvx}. By solving the optimization problem $\rmP1.4$, we can obtain a high-quality feasible solution of the UAV's trajectory, velocity, and acceleration for optimization problem $\rmP1.3$.

\subsubsection{\textbf{Overall Algorithm}}
We propose a BCD algorithm to solve the original optimization $\rmP1$ problem by integrating the SDR-based BSA and the SCA algorithm. Specifically, the SCA-BCD algorithm process is given in Algorithm \ref{algo2}.
\begin{algorithm}[t] 
\caption{Proposed SCA-BCD algorithm to approximately solve the original optimization problem $\rmP1$.}\label{algo2}
\renewcommand{\algorithmicensure}{\textbf{Output:}}
\begin{algorithmic}[1]
\STATE \textbf{Input}: $\bL_\rma[0]$, $\bL_\rma[N]$, $\bL_\rmb$, $\bL_\rmw$, $\bL_\rmc$, $Q_\rmb^\mathrm{max}$, $Q_\rmc^\mathrm{max}$, $R_\rmt$, $R_\rmc$, $\eta_\rms$, $\rmP_\rme$, $\mu _{{\rm{ba}}}^{\rm{0}}[n]$, $\mu _{{\rm{ca}}}^{\rm{0}}[n]$, $\bv^{\rm{0}}[n]$, $\ba^{\rm{0}}[n]$, and the pre-defined accuracy $\zeta_2 > 0$.
\STATE \textbf{Output}: $\widetilde R\left( {{\mu^{*} _{{\rm{ba}}}}[n],{\mu^{*} _{{\rm{ca}}}}[n]} \right)$, ${{{\bf{w}}^*_{\rm{b,1}}}}[n]$, ${\bf{w}}^*_{\rm{c}}[n]$, ${\bv}^*[n]$, ${\ba}^*[n]$, and ${\bL}^*_{\rm{a}}[n]$.
\WHILE{$\left| {{{\widetilde R}}\left( {{\mu^k _{{\rm{ba}}}}[n],{\mu^k _{{\rm{ca}}}}[n]} \right) \!\!-\!\! {{\widetilde R}}\left( {{\mu^{k - 1} _{{\rm{ba}}}}[n],{\mu^{k - 1} _{{\rm{ca}}}}[n]} \right)} \right|\!\! \ge\!\! \zeta_2$}
\STATE Optimizing the secret and covert beamformers:\\
Given $\bv^{k - 1}[n]$, $\ba^{k - 1}[n]$, and $\bL^{k - 1}_\rma[n]$, solve $\rmP1.2$ to update ${{{\bf{w}}^k_{\rm{b,1}}}}[n]$ and ${\bf{w}}^k_{\rm{c}}[n]$, respectively.\\
\STATE Optimizing the UAV's  velocity, acceleration, and trajectory:\\
Given the secret beamformer ${{{\bf{w}}^k_{\rm{b,1}}}}[n]$ and covert beamformer ${\bf{w}}^k_{\rm{c}}[n]$, solve $\rmP1.4$ to update ${\bf{L}}_{\rm{a}}^{k}[n]$, $\bv^{k}[n]$, and $\ba^{k}[n]$, respectively.
\STATE $k=k+1$;\\
\ENDWHILE
\end{algorithmic}
\end{algorithm} 
Furthermore, we can obtain the solution to the benchmark problem $\mathrm{P2}$ using the proposed SCA-BCD algorithm.

\subsubsection{\textbf{Complexity Analysis}}  
Note that Algorithm \ref{alg_1} is based on SDR and BSA, the computational complexity of Algorithm \ref{alg_1} is  ${\cal O}\!\left(\left(N_\rmb+N_\rmc\right)^4\sqrt{N_\rmb+N_\rmc}\mathrm{log}(1/\zeta_1) \right)$~\cite{luo2010semidefinite,cvx}. In addition, the computational complexity of the SCA algorithm is ${\cal O}\left(IN^3\right)$, where $I$ is the iteration number. Therefore, the total computational complexity of the SCA-BCD Algorithm \ref{algo2} is ${\cal O}\left(I^2N^3+I\left(N_\rmb+N_\rmc\right)^4\sqrt{N_\rmb+N_\rmc}\mathrm{log}(1/\zeta_1) \right)$. Analogously, the computational complexity of the solution to $\rmP2$ is ${\cal O}\left(I^2N^{3}+I\left(N_\rmb\right)^4\sqrt{N_\rmb}\mathrm{log}(1/\zeta_1)\right)$.

\section{Numerical Results}\label{sec:simulation}
In this section, we present extensive numerical examples to illustrate our results in Lemma \ref{lemma2} and Theorem \ref{thm1}, and the proposed SCA-BCD algorithm, where the key numerical parameters are set as follows. The number of antennas at Bob and Carlo are $N_\rmb=2$ and $N_\rmc=2$, respectively. The original and terminal positions of Alice are $\bL_\rma[0]=(0,0,200)$ and $\bL_\rma[N]=(500,500,200)$, and the positions of Bob, Carlo, and Willie are $\bL_\rmb=(200,300)$, $\bL_\rmw=(100,400)$, $\bL_\rmc=(200,150)$, respectively. The maximum transmit powers of Bob and Carlo is $Q_\rmb^\mathrm{max}=30$ dBm and $Q_\rmc^\mathrm{max}= 0$ dBm, respectively. The target secret rates of Bob and Carlo are $R_\rmt=16$ bps and $R_\rmc=4$ bps, and the SOP and DEP are $\eta_\rms=0.01$ and $\rmP_\rme=1$. The path-loss exponent of the ground channel and the air-to-ground are $\xi_\mathrm{ba}=\xi_\mathrm{ca}=-2$ and $\xi_\mathrm{bw}=\xi_\mathrm{cw}=-3$, respectively. Both the path-loss reference at 1 meter and the excessive path-loss coefficient are --10 dB~\cite{chen2021uav_tcom,bai2024efficient}. The Rician factors between Bob and Alice, and between Carlo and Alice are $K_\mathrm{ba} =3$ dB and $K_\mathrm{ca}= 0$ dB, respectively. The noise power at Alice and Willie are $\sigma _\rma^2=\sigma _\rmw^2=-90$ dBm~\cite{yan2021optimal,xu2022covert}. Specifically, we illustrate the numerical results of Bob's SCP and SOP at states $\rmH_0$ and $\rmH_1$, respectively. {We give the numerical results of Corlo's CCP and Willie's DEP at state $\rmH_1$.} Furthermore, we exhibit the effect of channel Rician factors on the sum of the secret and covert transmission rates. Finally, to demonstrate the performance of the proposed algorithm, we illustrate the optimal transmit powers of legitimate users and the UAV's flight parameters.

\begin{figure}[t]
\centering
\includegraphics[width=1\linewidth]{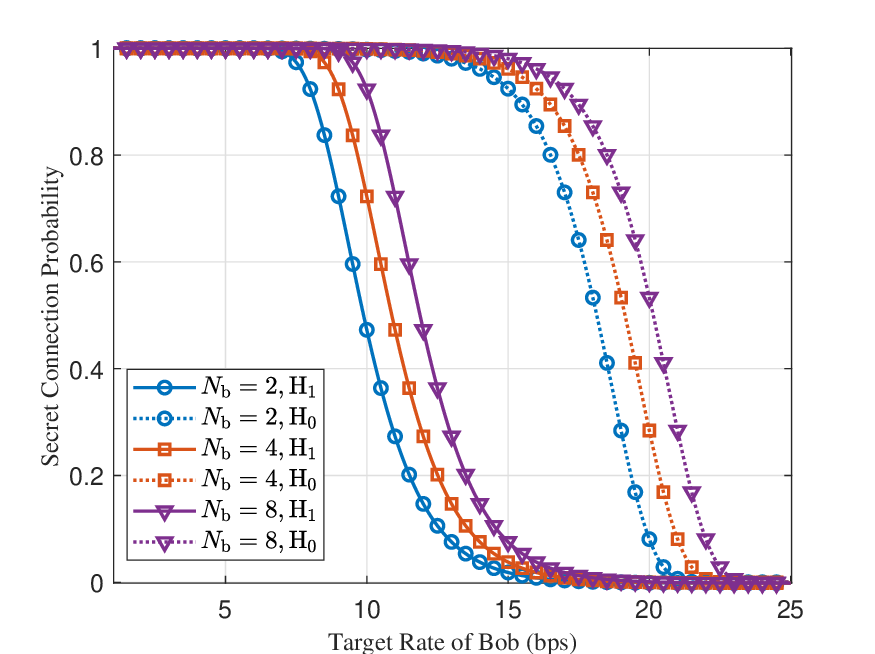}
\caption{Numerical plots of Bob's SCPs versus Bob's target rate $R_\rmt$ at UAV's original position at sates $\rmH_0$ and $\rmH_1$ with different number of antennas for Bob, where $N_\rmc=2$, $Q_\rmb^\mathrm{max}=30$ dBm, $Q_\rmc^\mathrm{max}=0$ dBm, $\xi_\mathrm{ba}=\xi_\mathrm{ca}=-2$, $K_\mathrm{ba}=3$ dB, $K_\mathrm{ca}= 0$ dB, and $\sigma _\rma^2=-90$ dBm.}
\label{fig:scp} 
\end{figure}
\begin{figure}[t]
\centering
\includegraphics[width=0.98\linewidth]{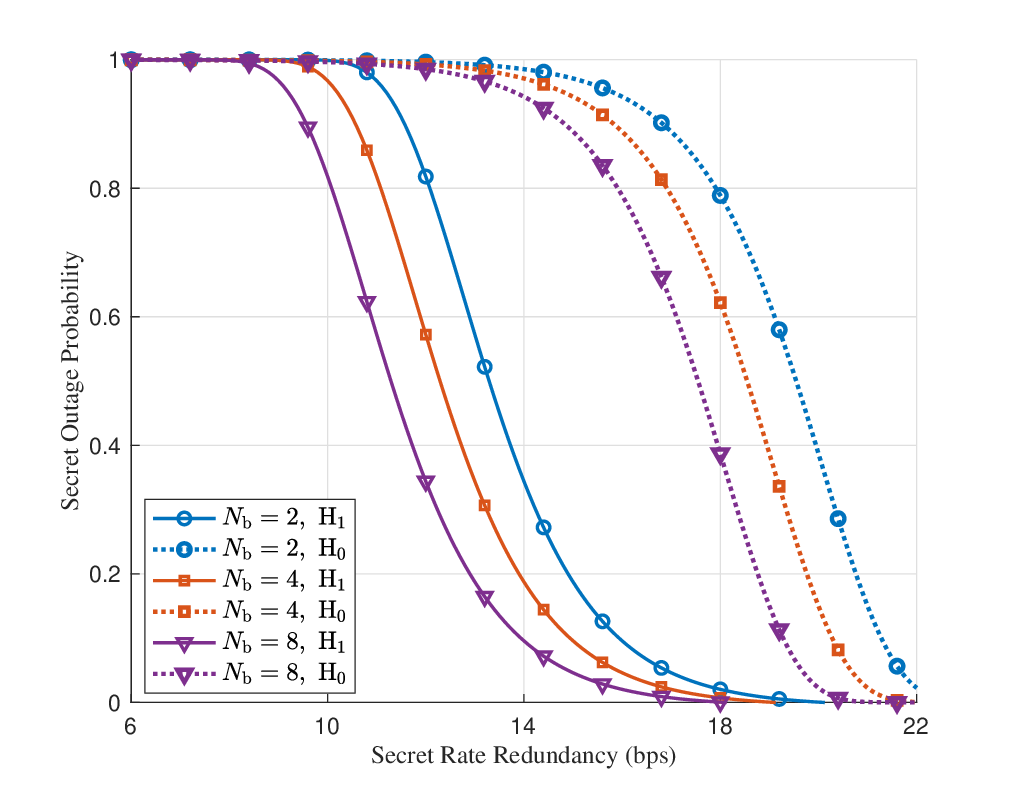}
\caption{Numerical plots of Bob's SOPs versus Bob's secret capacity redundancy at sates $\rmH_0$ and $\rmH_1$ with different number of antennas, where $N_\rmc=2$, $Q_\rmb^\mathrm{max}\!=\!30$ dBm, $Q_\rmc^\mathrm{max}\!=\!0$ dBm,  $\xi_\mathrm{bw}\!=\!\xi_\mathrm{cw}\!=\!-3$, and $\sigma _\rmw^2\!=\!-90$ dBm.}
\label{fig:sop} 
\end{figure}
\begin{figure}[t]
\centering
\includegraphics[width=1\linewidth]{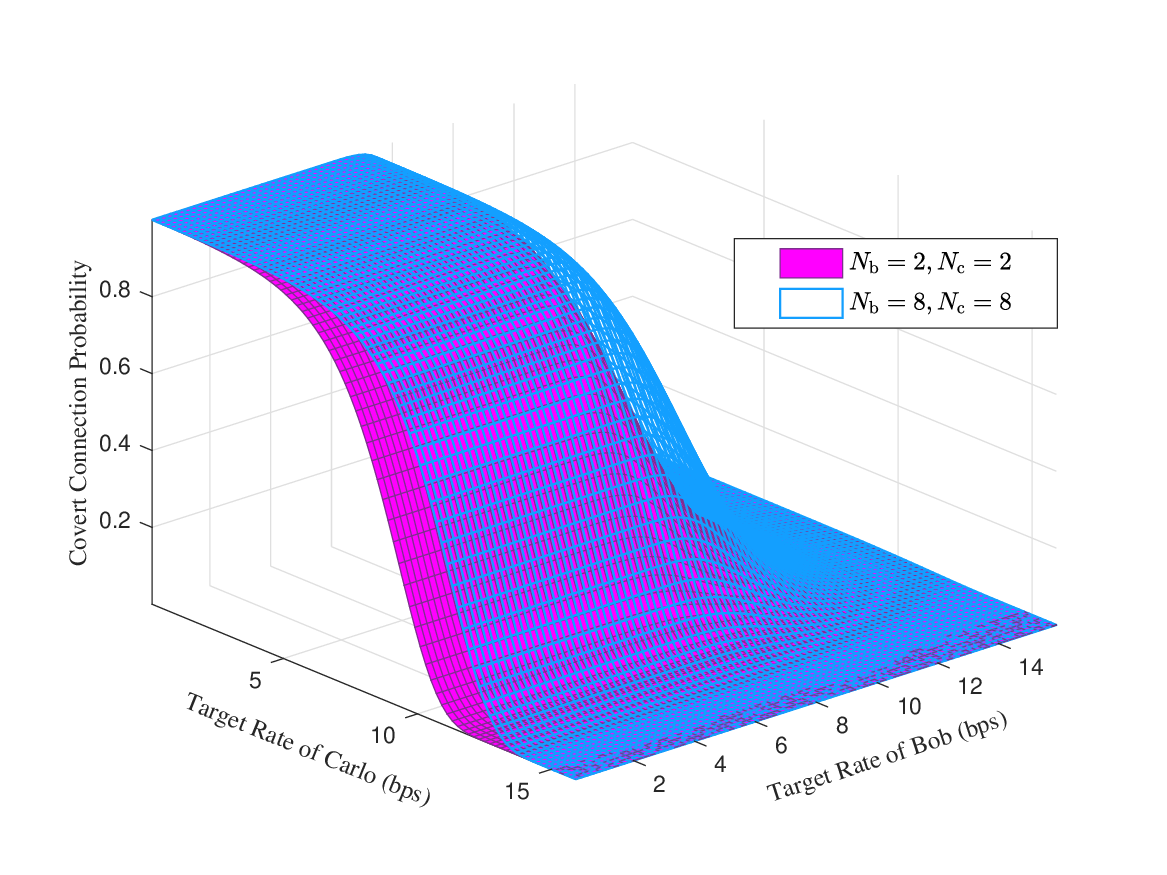}
\caption{Numerical plots of Carlo's CCPs versus Carlo's target rate $R_\rmc$ at UAV's original position with different number of antennas of Bob and Carlo, where $Q_\rmb^\mathrm{max}=30$ dBm, $Q_\rmc^\mathrm{max}= 0$ dBm, $\xi_\mathrm{ba}=\xi_\mathrm{ca}=-2$, $K_\mathrm{ba} =3$ dB, $K_\mathrm{ca}=0$ dB, and $\sigma _\rma^2=-90$ dBm.}
\label{fig:ccp} 
\end{figure}
\begin{figure}[t]
\centering
\includegraphics[width=0.9875\linewidth]{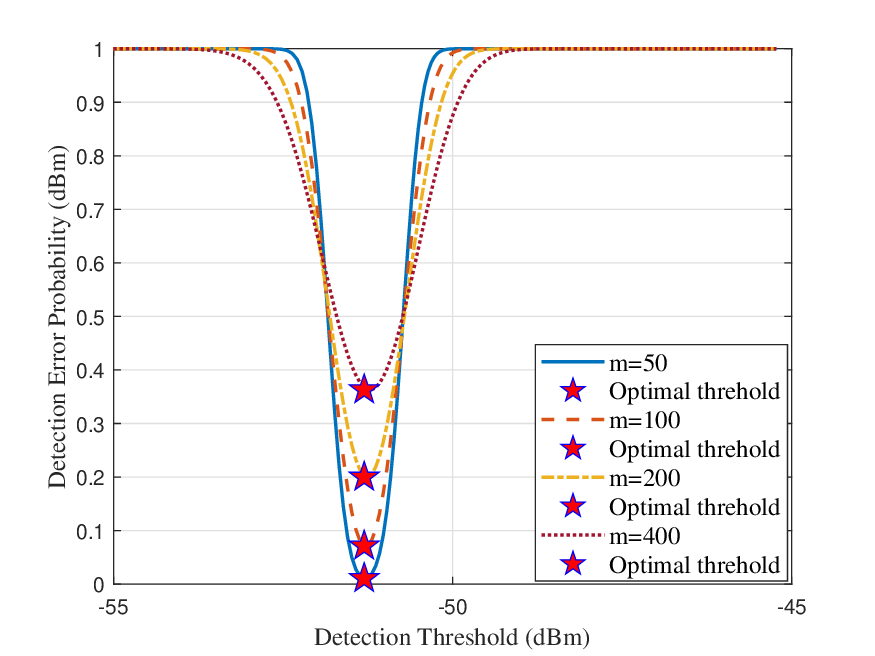}
\caption{Numerical plots of Willie's DEP versus Willie's detection threshold for different block lengths, where $Q_\rmb^\mathrm{max}=30$ dBm, $Q_\rmc^\mathrm{max}=0$ dBm, $\xi_\mathrm{bw}=\xi_\mathrm{cw}=-3$, and $\sigma _\rmw^2=-90$ dBm.
\label{fig:dep}} 
\end{figure}

To illustrate the impact of Bob's target transmission rate and the number of antennas on the SCP, we plot Bob's SCPs in Eqs. \eqref{scp1_fun} and \eqref{scp0} under both states $\rmH_1$ and $\rmH_0$ as a function of Bob's target rate $R_\rmt$ at UAV's original position in Fig. \ref{fig:scp} for different number of antennas at Bob. It is observed that as the target rate of Bob increases, the SCPs under both states $\rmH_0$ and $\rmH_1$ decrease from 1 to 0, which implies that communication outage occurs when the target rate becomes large under certain channel conditions. Particularly, it is observed that when the target transmission rate increases from 0, the SCP values under both $\rmH_1$ and $\rmH_0$ states show little variation. This is because, at this stage, the target transmission rate of the secret user Bob is low, and the SNR of the signal received by the UAV from Bob is sufficient to ensure reliable connection. However, once the target transmission rate reaches a certain threshold, the SCP values under both $\rmH_1$ and $\rmH_0$ states experience a steep decline. This is due to the fact that the UAV's received SNR from Bob's signal becomes insufficient to maintain a reliable connection, leading to the disruption of the confidential user's link. Furthermore, we can observe that the SCP at state $\rmH_0$ is larger than that of the state $\rmH_1$ at a specific target rate of Bob. This is because Carlo's signal is concealed within the Bob's signal at state $\rmH_1$, resulting in a lower SNR of Aclie compared to state $\rmH_0$. Note that increasing the number of antennas will improve the SCP at both states $\rmH_0$ and $\rmH_1$. 

In Fig. \ref{fig:sop}, in order to illustrate the impact of secret rate redundancy and the number of antennas on the SOPs formulated in Eqs. \eqref{sop0} and \eqref{sop1}, we plot Bob's SOPs as a function of the secret rate redundancy under both states $\rmH_0$ and $\rmH_1$. From Fig. \ref{fig:sop}, we can observe that as Bob's secret rate redundancy increases, the SOPs under both states $\rm H_0$ and $\rm H_1$ decrease. This phenomenon indicates that as the secret rate redundancy increases, it becomes more difficult for the adversary, Willie, to achieve a channel capacity that exceeds the secret rate redundancy. Consequently, the secret outage probability decreases, as the adversary's ability to intercept and decode the secret signal becomes increasingly limited. Similar to the SCPs, we can observe that the SOPs at state $\rmH_0$ are larger than that of the state $\rmH_1$ at a specific secret rate redundancy of Bob. This is because Carlo's signal is hidden within Bob's signal, leading to a lower SNR for Willie under $\rmH_1$ compared to $\rmH_0$. Furthermore, it can be observed that in both states $\rmH_0$  and $\rmH_1$, the secret outage probability decreases as the number of transmission antennas increases. This confirms that the use of multiple antennas for precoding at the transmitter effectively reduces the probability of data being intercepted.

In Fig. \ref{fig:ccp}, we plot Carlo's CCP  in Eq. \eqref{ccp} as a function of Bob's target rate $R_\rmt$ and Carlo's target rate $R_\rmc$ at UAV's original position. From Fig. \ref{fig:ccp}, we can observe that as Bob's and Carlo's target rates increase, the CCP decreases from 1 to 0, which indicates that covert connection outages will occur when Bob's or Carlo's target rates become large under specific channel conditions. This phenomenon includes two aspects of explanations, firstly, as Bob's target rate increases, the communication link between Bob and Alice is more likely to be outage, leading to a connection outage between Carlo and Alice. Second, even if reliable communication is maintained between Bob and Alice, a larger Carlo's target rate will result in a connection outage between Carlo and Alice. Similarly, we can observe that increasing the number of antennas will improve the CCP due to the diversity gain provided by multiple antennas.

In Fig. \ref{fig:dep}, we plot Willie's DEP versus the detection threshold for different block lengths. It can be seen from Fig. \ref{fig:dep} that the DEP initially decreases and then increases as the detection threshold increases, indicating the presence of an optimal threshold and the minimum DEP value. Note that the minimum DEP decreases as the signal sequence length increases, as Willie can make more informed judgments with larger sequences.
This observation is consistent with the theoretical analysis in Eq. \eqref{DEP}. Moreover, it suggests that perfect covertness can be achieved through collaboration between Bob and Carlo, corresponding to $\mathrm{P_e} = 1$.

\begin{figure}[t]
\centering
\includegraphics[width=1\linewidth]{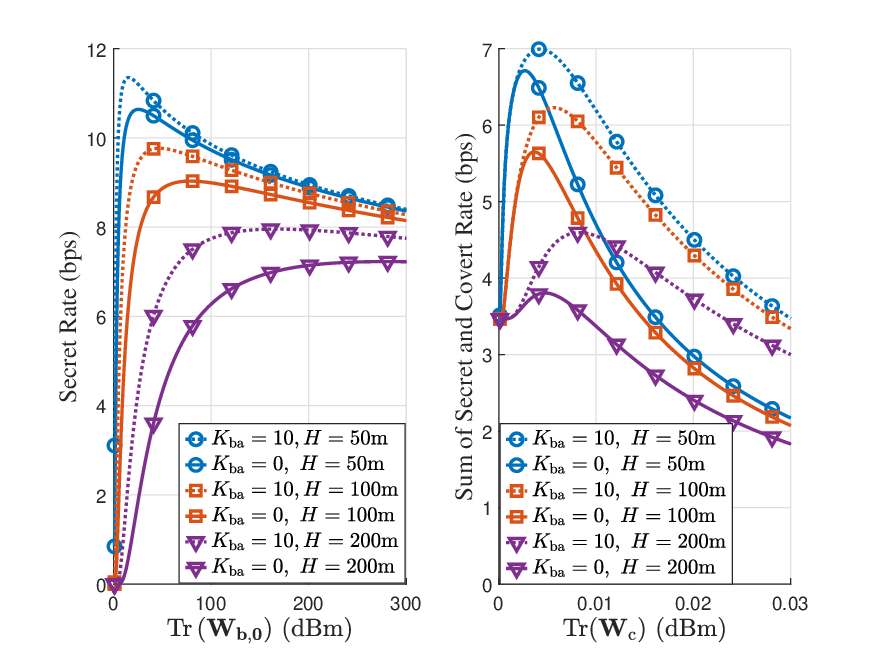}
\caption{Numerical plots of secret rate and the sum of the secret and covert rates versus Bob's and Carlo's transmit powers at the UAV's original position, respectively, with different Rician factors and UAV altitudes, where $Q_\rmb^\mathrm{max}=30$ dBm, $Q_\rmc^\mathrm{max}= 0$ dBm, $R_\rmt=16$ bps, $R_\rmc=4$ bps, $\xi_\mathrm{ba}=\xi_\mathrm{ca}=-2$, and $\sigma _\rma^2=-90$ dBm.}
\label{fig:R0_R1} 
\end{figure}

\begin{figure}[h]
\centering
\includegraphics[width=1\linewidth]{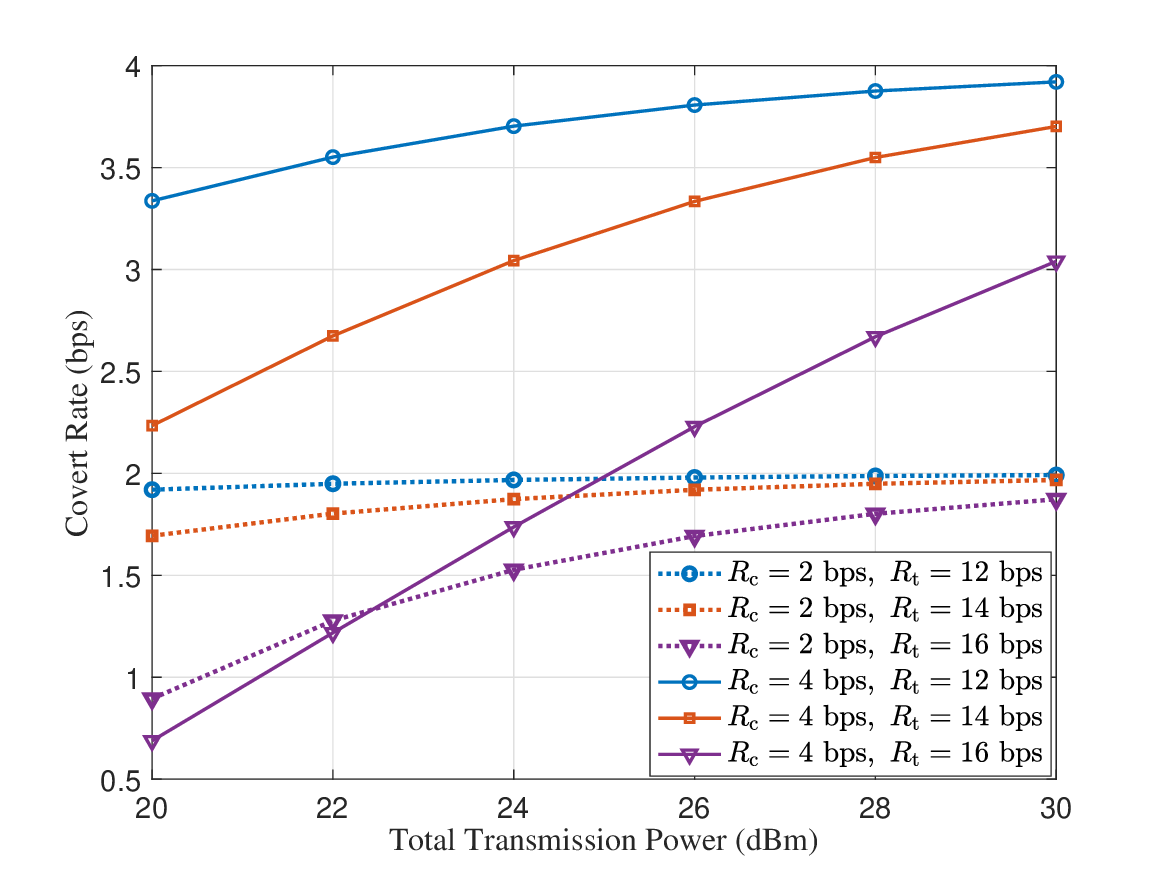}
\caption{Numerical plots of covert rate versus total transmission power at the UAV's original position with different target rates of Bob and Carlo, where $N_\rmc=2$, $\xi_\mathrm{ba}=\xi_\mathrm{ca}=-2$, $\xi_\mathrm{bw}=\xi_\mathrm{cw}=-3$, $K_\mathrm{ba}=3$ dB, $K_\mathrm{ca}= 0$ dB, and $\sigma _\rma^2=-90$ dBm.}
\label{fig:secret}
\end{figure}

\begin{figure}[t]
\centering
\includegraphics[width=1\linewidth]{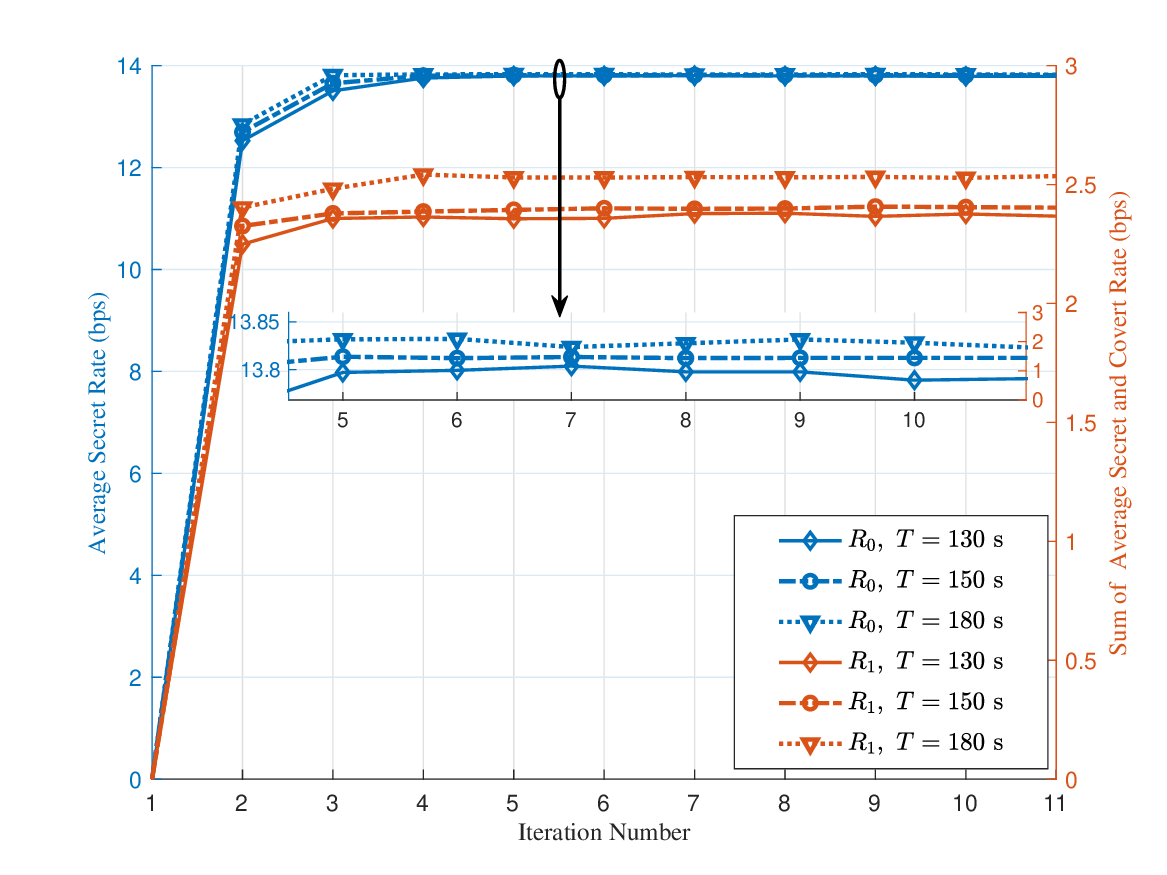}
\caption{Numerical plots  Bob's average secret and the sum of Bob's average secret and Carlo's average covert rates versus the iteration number at states $\rmH_0$ and $\rmH_1$, where $Q_\rmb^\mathrm{max}=30$ dBm, $Q_\rmc^\mathrm{max}\!=\!0$ dBm, $R_\rmt=16$ bps, $R_\rmc=4$ bps, $\kappa\!=\!0.5$, $\xi_\mathrm{ba}\!=\!\xi_\mathrm{ca}\!=\!-2$, $\xi_\mathrm{bw}\!=\!\xi_\mathrm{cw}\!=\!-3$, and $\sigma _\rma^2=\sigma _\rmw^2=-90$ dBm.}
\label{fig:avesumratevsit_H0_H1} 
\end{figure}

To verify the results in Lemma \ref{lemma2} and Theorem \ref{thm1}, we plot Bob's secret rate and the sum of secret and covert rates as a function of Bob and Carlo' transmit powers in Fig. \ref{fig:R0_R1}. Specifically, on the left part of Fig. \ref{fig:R0_R1}, we illustrate Bob's secret rate under state $\rmH_0$ for different channel Rician factors and UAV altitudes as a function of Bob's transmit power. As observed, Bob's secret rate increases initially and then decreases with the transmit power, indicating the existence of an optimal transmit power for Bob and an optimal beamformer. On the right part of Fig. \ref{fig:R0_R1}, we plot the sum of Bob's secret and Carlo's covert rates under state $\rmH_1$ for different Rician factors and UAV altitudes as a function of the transmit power of Carlo. Similarly, these curves indicate the existence of optimal transmit power and beamformer for both Bob and Carlo. The above numerical results verify our theoretical analysis in Lemma \ref{lemma2} and Theorem \ref{thm1}, and demonstrate the direction for power optimization and beamforming design.

In Fig. \ref{fig:secret}, we plot Carlo's covert rate versus the total transmission power for different secret and covert target rates. From Fig. \ref{fig:secret}, it can be seen Carlo's covert rate increases with the total transmission power because, as the total power increases, both Bob and Carlo can achieve higher probabilities of secure and covert connections, thereby increasing the covert rate. Furthermore, when the total transmission power is low, choosing a smaller target rate for Bob enables Carlo to achieve a higher covert rate. This is because a reliable connection for Bob is the foundation for Carlo's covert communication, and at lower total power levels, using a lower target rate ensures a more reliable connection. Furthermore, it can be observed that when the total transmission power is low, if Bob adopts a higher target rate for communication, Carlo can benefit from using a smaller target rate to achieve a higher covert transmission rate. This is because, at lower total power levels, the reduced connection probability of Bob directly causes a decrease in Carlo's connection probability. In this case, Carlo's adoption of a smaller target rate may lead to a greater benefit.

In Figs. \ref{fig:avesumratevsit_H0_H1} and \ref{fig:avesumratevsit}, we demonstrate the convergence performance of the proposed algorithm. Specifically, in Fig. \ref{fig:avesumratevsit_H0_H1}, we plot Bob's average secret and the sum of Bob's average secret and Carlo's average covert rates at states $\rmH_0$ and $\rmH_1$, respectively. Clearly, Bob's average secret at state $\rmH_0$ is significantly larger than that of the sum of Bob's average secret and Carlo's average covert rates at state $\rmH_1$. In Fig. \ref{fig:avesumratevsit}, for different flight periods, we plot the sum of the average secret and covert rates for two algorithms: JOTB, which Jointly Optimized the UAV's Trajectory and ground user's Beamformers, and SOTFB, which Solely Optimized UAV's Trajectory with Fixed Beamformers for ground users where both the secret and covert beamformers are identity matrices with fixed transmit powers. The numerical results demonstrate remarkable performance of our proposed algorithm and validate the performance of joint optimization of UAV's trajectory and users' beamformer designs over separate optimizations. Furthermore, it is observed that as the UAV's flight period increases, its sum of average secret and average covert rates becomes higher, and the convergence of the average secure rate is slightly slower. This is because, during longer flight periods, the search space for optimizing the UAV's flight trajectory expands, leading to a slower convergence. However, the extended flight time enables better flight trajectory decisions, thereby resulting in a higher corresponding average secure rate.

\begin{figure}[t]
\centering
\includegraphics[width=1\linewidth]{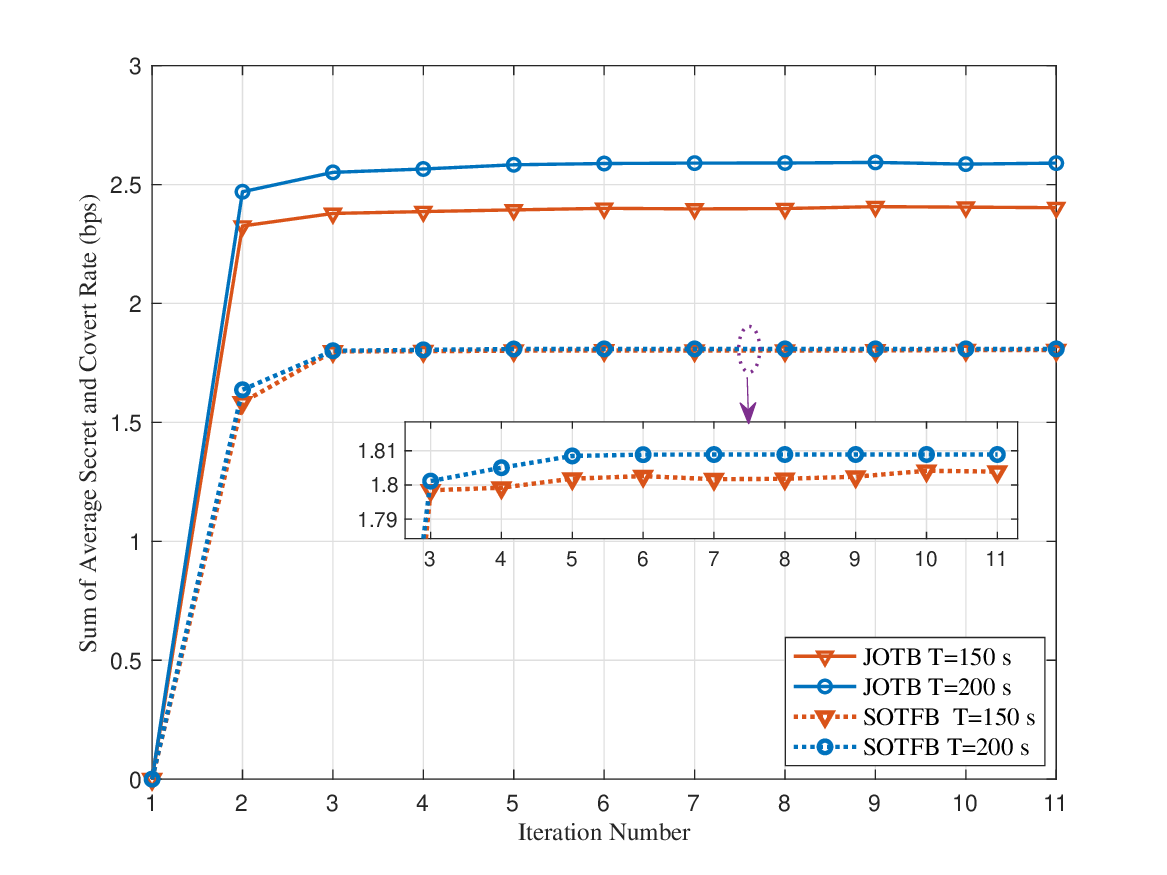}
\caption{Numerical plots of the sum of Bob's average secret and Carlo's average covert rates versus the iteration number, where $Q_\rmb^\mathrm{max}=30$ dBm, $Q_\rmc^\mathrm{max}=0$ dBm, $R_\rmt=16$ bps, $R_\rmc=4$ bps, $\kappa=0.5$, $\xi_\mathrm{ba}=\xi_\mathrm{ca}=-2$, $\xi_\mathrm{bw}=\xi_\mathrm{cw}=-3$, and $\sigma _\rma^2=\sigma _\rmw^2=-90$ dBm.}
\label{fig:avesumratevsit}
\end{figure}

\begin{figure}[t] 
\centering 
\includegraphics[width=1\linewidth]{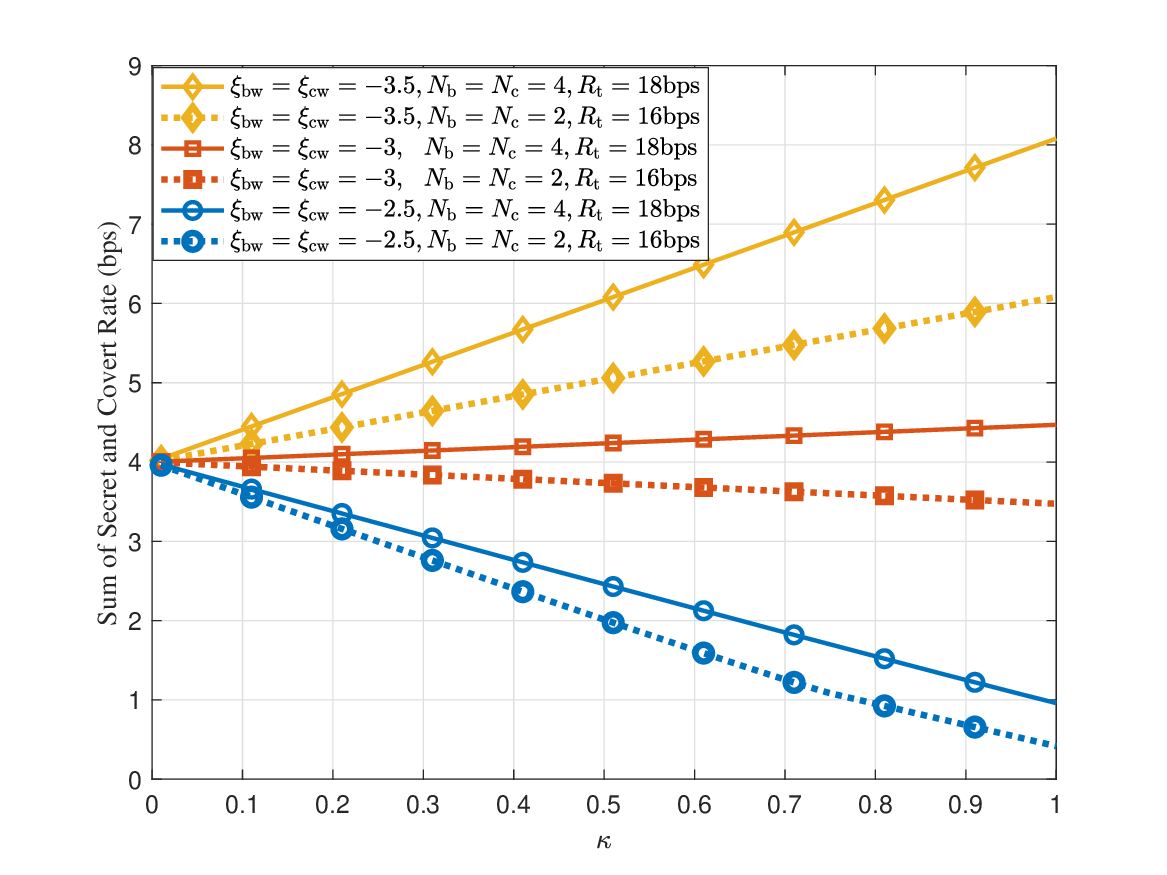}
\caption{Numerical plots of the sum of secret and covert rates versus the weight $\kappa$ with different Willie's path loss at the UAV's original position, where $Q_\rmb^\mathrm{max}=30$ dBm, $Q_\rmc^\mathrm{max}= 0$ dBm, $R_\rmc=4$ bps,  $\xi_\mathrm{ba}=\xi_\mathrm{ca}=-2$, $\xi_\mathrm{bw}=\xi_\mathrm{cw}=-3$, and $\sigma _\rma^2=\sigma _\rmw^2=-90$ dBm.}
\label{fig:sumratevspi}
\end{figure}
\begin{figure}[t]
\centering
\includegraphics[width=1\linewidth]{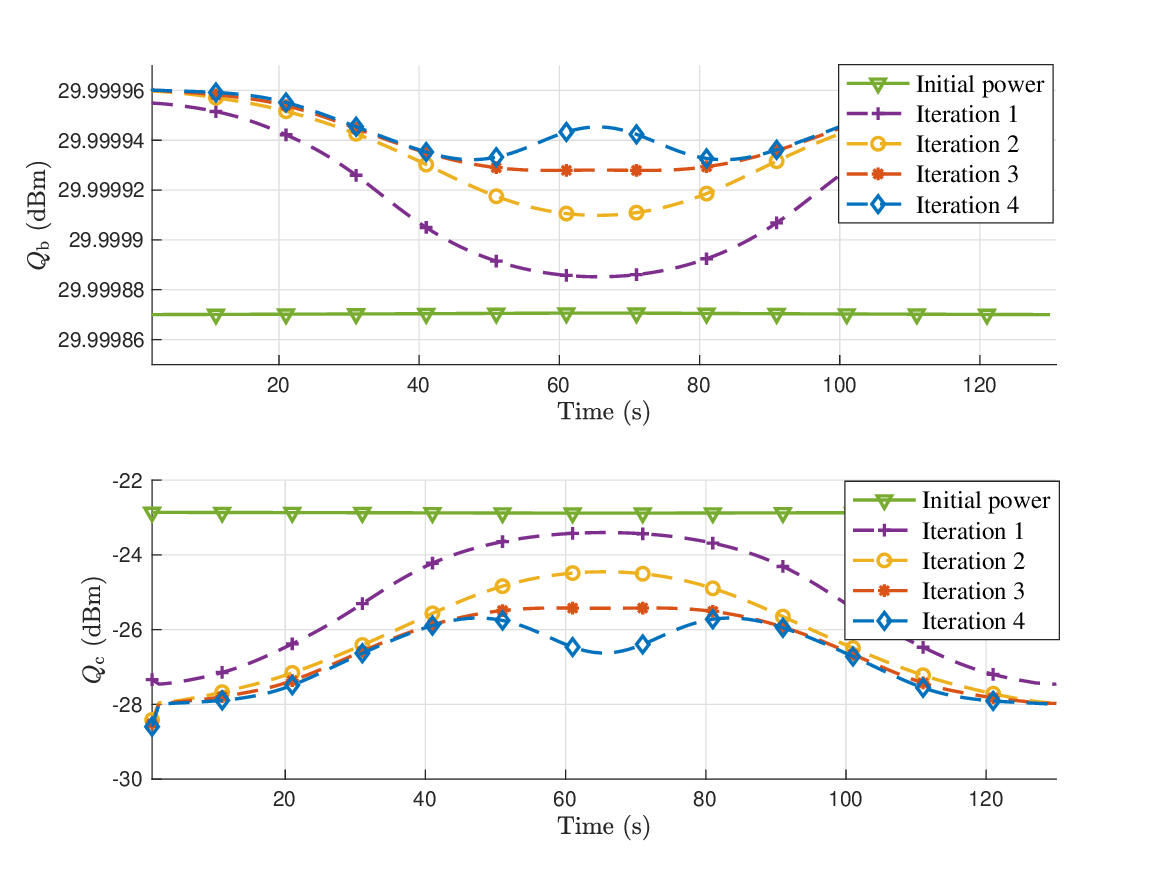}
\caption{Numerical plots of optimized transmit powers of Bob and Carlo during the UAV trajectory, where $Q_\rmb^\mathrm{max}=30$ dBm, $Q_\rmc^\mathrm{max}=0$ dBm, $R_\rmt=16$ bps, $R_\rmc=4$ bps, $\kappa=0.5$, $\xi_\mathrm{ba}=\xi_\mathrm{ca}=-2$, $\xi_\mathrm{bw}=\xi_\mathrm{cw}=-3$, and $\sigma _\rma^2=\sigma _\rmw^2=-90$ dBm.}
\label{fig:opt_power} 
\end{figure}
\begin{figure}[t]
\centering
\includegraphics[width=1\linewidth]{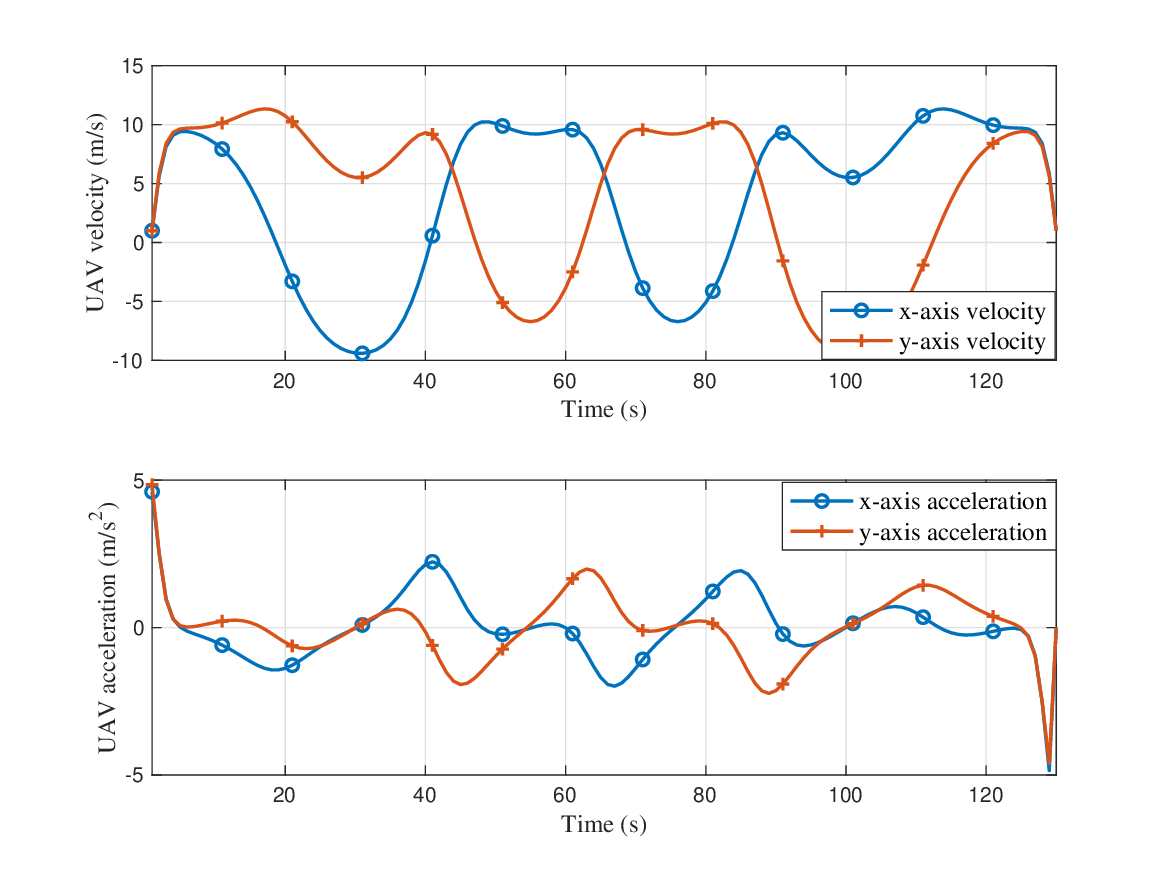}
\caption{Numerical plots of optimized velocity and acceleration during the UAV trajectory, where $Q_\rmb^\mathrm{max}=30$ dBm, $Q_\rmc^\mathrm{max}=0$ dBm, $R_\rmt=16$ bps, $R_\rmc=4$ bps, $\kappa=0.5$, $\xi_\mathrm{ba}=\xi_\mathrm{ca}=-2$, $\xi_\mathrm{bw}=\xi_\mathrm{cw}=-3$, and $\sigma _\rma^2=\sigma _\rmw^2=-90$ dBm, $v_\mathrm{min}=1$ m/s, $v_\mathrm{max}=20$ m/s, $a_\mathrm{max}=10$ m/s.}
\label{fig:opt_speed} 
\end{figure}
\begin{figure}[t]
\centering
\includegraphics[width=1\linewidth]{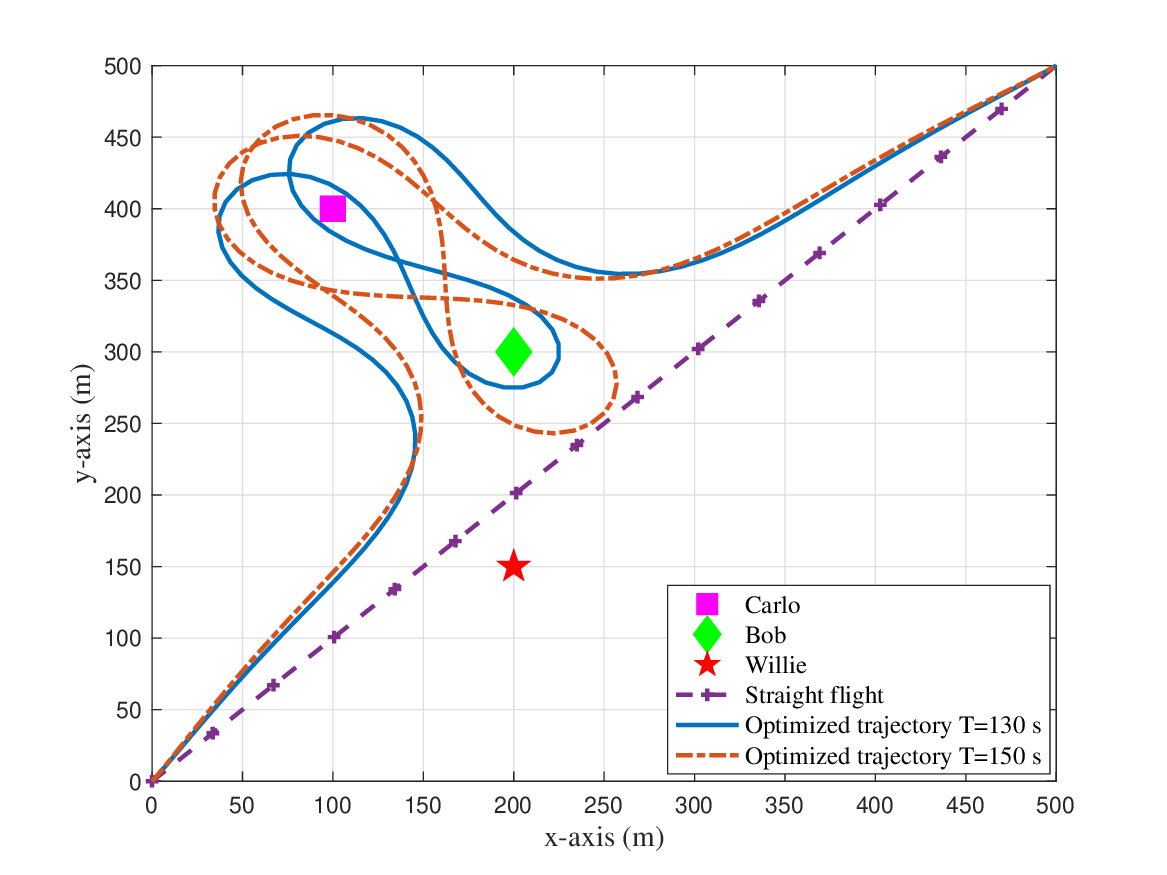}
\caption{Numerical plots of optimized UAV trajectories under different flight periods, where $Q_\rmb^\mathrm{max}=30$ dBm, $Q_\rmc^\mathrm{max}= 0$ dBm, $R_\rmt=16$ bps, $R_\rmc=4$ bps, $\xi_\mathrm{ba}=\xi_\mathrm{ca}=-2$, $\xi_\mathrm{bw}=\xi_\mathrm{cw}=-3$, and $\sigma _\rma^2=\sigma _\rmw^2=-90$ dBm, $v_\mathrm{min}=1$ m/s, $v_\mathrm{max} =20$ m/s, $a_\mathrm{max} =10$ m/s.}
\label{fig:UAV_traj} 
\end{figure}

{In Fig. \ref{fig:sumratevspi}, we plot the sum of secret and covert rates versus the weight $\kappa$. We keep the air-ground link's path loss constant while gradually increasing the path loss of Willie's links to the legitimate users Bob and Carlo. From Fig. \ref{fig:sumratevspi}, we can observe that Willie's channel quality significantly impacts the sum of the secret and covert rates. Specifically, when Willie's path loss is close to that of Bob, the covert rate dominates the sum of the security rates. On the other hand, the secret rate dominates the security rates when Willie's path loss deteriorates. This observation indicates that either the secret rate or the covert rate can dominate the sum of the security rates. Furthermore, since increasing the number of antennas increases the SCP and decreases the SOP as shown in Figs. \ref{fig:scp} and \ref{fig:sop}, we increase the number of Bob's antennas and Bob's target transmission rate. Comparing the solid and dotted curves of the same color in Fig. \ref{fig:sumratevspi}, we find that the sum of secret and covert rates increases as Bob's target transmission rate increases, particularly when Willie’s channel deteriorates, although Willie might obtain a larger diversity gain in this case. This phenomenon further highlights the benefits of increasing the number of transmit antennas to improve the system security performance.}

In Figs. \ref{fig:opt_power}, \ref{fig:opt_speed}, and \ref{fig:UAV_traj}, we plot the optimized transmit powers of Bob and Carlo in each iteration, the optimized UAV's velocity and acceleration, and the optimized UAV's trajectory, respectively. In Fig. \ref{fig:opt_power}, based on the perfect covert constrict in Eq. \eqref{45}, the transmit powers of Bob and Carlo is optimized during the UAV flight period, respectively. In Fig. \ref{fig:opt_speed}, we respectively plot the optimized UAV's velocity and the acceleration, which corresponds to the UAV's trajectory in Fig. \ref{fig:UAV_traj} when $T=130$ s. In Fig. \ref{fig:UAV_traj}, we plot the UAV's trajectories in different flight periods. {Compared to straight-line flight, our proposed algorithm effectively optimizes the UAV's trajectory according to different flight periods. In summary, the secret and covert beamformers, along with UAV trajectory-related parameters, can be jointly optimized using the proposed SCA-BCD algorithm to yield superior performance.} 

\section{Conclusion} \label{sec:conclusion}
We proposed a multi-user collaborative secret and covert uplink transmission scheme using NOMA and analyzed the performance of our scheme via optimization problems that maximized the sum of secret and covert rates under constraints on security and reliability. Our results provided feasible solutions for the designs of the beamforming method of ground users and UAV's trajectory and revealed the influence of channel parameters and design parameters. There are several avenues for future studies. Firstly, we assumed quasi-static fading channels when a fixed-wing UAV is used. In contrast, when a rotation-wing UAV is used, the block fading channel is practical and one can generalize our results to this case to account for both uplink and downlink multi-user UAV communications~\cite{hua20203d}. Secondly, we assumed perfect CSI knowledge at the UAV receiver and a single adversary. However, practical communication systems usually involve multiple adversaries and CSI estimation can be imperfect. Thus, one can generalize our results to the case of multiple adversaries with imperfect CSI. To do so, one might need to combine ideas in covert transmission over imperfect CSI~\cite{ma2021robust}, multiple adversaries~\cite{chen2021uav_tvt}, the uncertainty of the eavesdropper's location~\cite{jiang2021resource}, and a multiple antenna adversary~\cite{shahzad2019covert}. {Finally, we assumed that both Alice and Willie have a single antenna. However, MIMO communication is common in practice \cite{XJP_Model}. Thus, it is of value to explore the effect of MIMO  configurations for both Alice and Willie~\cite{bai2023covert} on the performance of joint secret and covert communications in UAV systems.}

\bibliographystyle{IEEEtran}
\bibliography{IEEE_UAVcovert}   

\begin{thebibliography}{10}
\providecommand{\url}[1]{#1}
\csname url@samestyle\endcsname
\providecommand{\newblock}{\relax}
\providecommand{\bibinfo}[2]{#2}
\providecommand{\BIBentrySTDinterwordspacing}{\spaceskip=0pt\relax}
\providecommand{\BIBentryALTinterwordstretchfactor}{4}
\providecommand{\BIBentryALTinterwordspacing}{\spaceskip=\fontdimen2\font plus
\BIBentryALTinterwordstretchfactor\fontdimen3\font minus \fontdimen4\font\relax}
\providecommand{\BIBforeignlanguage}[2]{{%
\expandafter\ifx\csname l@#1\endcsname\relax
\typeout{** WARNING: IEEEtran.bst: No hyphenation pattern has been}%
\typeout{** loaded for the language `#1'. Using the pattern for}%
\typeout{** the default language instead.}%
\else
\language=\csname l@#1\endcsname
\fi
#2}}
\providecommand{\BIBdecl}{\relax}
\BIBdecl

\bibitem{xu2024secret}
J.~Xu, L.~Zhou, X.~Xie, and L.~Bai, ``Secret and covert communications for multi-antenna uplink {UAV} systems,'' in \emph{Proc. IEEE/CIC Int. Conf. Commun. China (ICCC Workshops)}, 2024, pp. 593--598.

\bibitem{mozaffari2019tutorial}
M.~Mozaffari, W.~Saad, M.~Bennis, Y.-H. Nam, and M.~Debbah, ``A tutorial on {UAV}s for wireless networks: Applications, challenges, and open problems,'' \emph{{IEEE} Commun. Surveys Tuts.}, vol.~21, no.~3, pp. 2334--2360, 3rd Quart.2019.

\bibitem{XJP_Model}
J.~Xu, X.~Cheng, and L.~Bai, ``A 3-{D} space-time-frequency non-stationary model for low-altitude {UAV} mm{W}ave and massive {MIMO} aerial fading channels,'' \emph{IEEE Trans. Antennas Propag.}, vol.~70, no.~11, pp. 10\,936--10\,950, Nov. 2022.

\bibitem{adil2022systematic}
M.~Adil, M.~A. Jan, Y.~Liu, H.~Abulkasim, A.~Farouk, and H.~Song, ``A systematic survey: security threats to {UAV}-aided {IoT} applications, taxonomy, current challenges and requirements with future research directions,'' \emph{IEEE Trans. Intell. Transp. Syst}, vol.~24, no.~2, pp. 1437--1455, Feb. 2022.

\bibitem{wei2022toward}
Z.~Wei, F.~Liu, C.~Masouros, N.~Su, and A.~P. Petropulu, ``Toward multi-functional {6G} wireless networks: Integrating sensing, communication, and security,'' \emph{{IEEE} Wirel. Commun.}, vol.~60, no.~4, pp. 65--71, Apr. 2022.

\bibitem{Survey_security_2016}
J.~Liu, W.~Liu, Q.~Wu, D.~Li, and S.~Chen, ``Survey on key security technologies for space information networks,'' \emph{J. Commun. Info. Net. (JCIN)}, vol.~1, no.~1, pp. 72--85, Jun. 2016.

\bibitem{Shannon1949}
C.~E. Shannon, ``Communication theory of secrecy systems,'' \emph{Bell Syst. Tech. J.}, vol.~28, no.~4, pp. 656--715, Oct. 1949.

\bibitem{chen2016survey}
X.~Chen, D.~W.~K. Ng, W.~H. Gerstacker, and H.-H. Chen, ``A survey on multiple-antenna techniques for physical layer security,'' \emph{{IEEE} Commun. Surveys Tuts.}, vol.~19, no.~2, pp. 1027--1053, 2nd Quart. 2016.

\bibitem{wyner1975wire}
A.~D. Wyner, ``The wire-tap channel,'' \emph{Bell Syst. Tech. J.}, vol.~54, no.~8, pp. 1355--1387, Oct. 1975.

\bibitem{csiszar1978broadcast}
I.~Csisz{\'a}r and J.~Korner, ``Broadcast channels with confidential messages,'' \emph{{IEEE} Trans. Inf. Theory}, vol.~24, no.~3, pp. 339--348, May 1978.

\bibitem{trappe2015challenges}
W.~Trappe, ``The challenges facing physical layer security,'' \emph{{IEEE} Wirel. Commun.}, vol.~53, no.~6, pp. 16--20, Jun. 2015.

\bibitem{mahdavifar2011achieving}
H.~Mahdavifar and A.~Vardy, ``Achieving the secrecy capacity of wiretap channels using {Polar} codes,'' \emph{{IEEE} Trans. Inf. Theory}, vol.~57, no.~10, pp. 6428--6443, Oct. 2011.

\bibitem{klinc2011ldpc}
D.~Klinc, J.~Ha, S.~W. McLaughlin, J.~Barros, and B.-J. Kwak, ``{LDPC} codes for the {Gaussian} wiretap channel,'' \emph{{IEEE} Trans. Inf. Forensics Secur.}, vol.~6, no.~3, pp. 532--540, Aug. 2011.

\bibitem{bash2013limits}
B.~A. Bash, D.~Goeckel, and D.~Towsley, ``Limits of reliable communication with low probability of detection on {AWGN} channels,'' \emph{IEEE J. Sel. Areas Commun.}, vol.~31, no.~9, pp. 1921--1930, Sep. 2013.

\bibitem{wang2016fundamental}
L.~Wang, G.~W. Wornell, and L.~Zheng, ``Fundamental limits of communication with low probability of detection,'' \emph{{IEEE} Trans. Inf. Theory}, vol.~62, no.~6, pp. 3493--3503, Jun. 2016.

\bibitem{bloch2016covert}
M.~R. Bloch, ``Covert communication over noisy channels: A resolvability perspective,'' \emph{{IEEE} Trans. Inf. Theory}, vol.~62, no.~5, pp. 2334--2354, May 2016.

\bibitem{zhang2019covert}
Q.~Zhang, M.~Bakshi, and S.~Jaggi, ``Covert communication with polynomial computational complexity,'' \emph{{IEEE} Trans. Inf. Theory}, vol.~66, no.~3, pp. 1354--1384, Mar. 2019.

\bibitem{kadampot2020multilevel}
I.~A. Kadampot, M.~Tahmasbi, and M.~R. Bloch, ``Multilevel-coded pulse-position modulation for covert communications over binary-input discrete memoryless channels,'' \emph{{IEEE} Trans. Inf. Theory}, vol.~66, no.~10, pp. 6001--6023, Oct. 2020.

\bibitem{chen2023covert}
X.~Chen, J.~An, Z.~Xiong, C.~Xing, N.~Zhao, F.~R. Yu, and A.~Nallanathan, ``Covert communications: A comprehensive survey,'' \emph{{IEEE} Commun. Surveys Tuts.}, vol.~25, no.~2, pp. 1173--1198, 2nd Quart. 2023.

\bibitem{bai2023covert}
L.~Bai, J.~Xu, and L.~Zhou, ``Covert communication for spatially sparse mmwave massive {MIMO} channels,'' \emph{{IEEE} Trans. Commun.}, vol.~71, no.~3, pp. 1615--1630, Mar. 2023.

\bibitem{wang2022physical}
J.~Wang, X.~Wang, R.~Gao, C.~Lei, W.~Feng, N.~Ge, S.~Jin, and T.~Q. Quek, ``Physical layer security for {UAV} communications: A comprehensive survey,'' \emph{China Commun.}, vol.~19, no.~9, pp. 77--115, Sep. 2022.

\bibitem{wu2020energy}
H.~Wu, Y.~Wen, J.~Zhang, Z.~Wei, N.~Zhang, and X.~Tao, ``Energy-efficient and secure air-to-ground communication with jittering {UAV},'' \emph{{IEEE} Trans. Veh. Technol.}, vol.~69, no.~4, pp. 3954--3967, Apl. 2020.

\bibitem{sun2019secure}
X.~Sun, W.~Yang, and Y.~Cai, ``Secure communication in {NOMA}-assisted millimeter-wave {SWIPT} {UAV} networks,'' \emph{IEEE Internet Things J.}, vol.~7, no.~3, pp. 1884--1897, Dec. 2019.

\bibitem{liu2022ris}
X.~Liu, Y.~Yu, B.~Peng, X.~B. Zhai, Q.~Zhu, and V.~C. Leung, ``{RIS-UAV} enabled worst-case downlink secrecy rate maximization for mobile vehicles,'' \emph{{IEEE} Trans. Veh. Technol.}, vol.~72, no.~5, pp. 6129--6141, May 2023.

\bibitem{duo2020energy}
B.~Duo, Q.~Wu, X.~Yuan, and R.~Zhang, ``Energy efficiency maximization for full-duplex {UAV} secrecy communication,'' \emph{{IEEE} Trans. Veh. Technol.}, vol.~69, no.~4, pp. 4590--4595, Mar. 2020.

\bibitem{li2021secrecy}
M.~Li, X.~Tao, N.~Li, H.~Wu, and J.~Xu, ``Secrecy energy efficiency maximization in {UAV}-enabled wireless sensor networks without eavesdropper’s {CSI},'' \emph{IEEE Internet Things J.}, vol.~9, no.~5, pp. 3346--3358, Jul. 2021.

\bibitem{zhang2019securing}
G.~Zhang, Q.~Wu, M.~Cui, and R.~Zhang, ``Securing {UAV} communications via joint trajectory and power control,'' \emph{{IEEE} Trans. Wireless Commun.}, vol.~18, no.~2, pp. 1376--1389, Feb. 2019.

\bibitem{li2021secure}
T.~Li, J.~Ye, J.~Dai, H.~Lei, W.~Yang, G.~Pan, and Y.~Chen, ``Secure {UAV}-to-vehicle communications,'' \emph{{IEEE} Trans. Commun.}, vol.~69, no.~8, pp. 5381--5393, Apl. 2021.

\bibitem{wang2023secrecy}
S.~Wang, L.~Li, R.~Ruby, X.~Ruan, J.~Zhang, and Y.~Zhang, ``Secrecy-energy-efficiency {UAV}-enabled two-way maximization for relay systems,'' \emph{{IEEE} Trans. Veh. Technol.}, vol.~72, no.~10, pp. 12\,900--12\,911, May 2023.

\bibitem{zhou2021three}
X.~Zhou, S.~Yan, D.~W.~K. Ng, and R.~Schober, ``Three-dimensional placement and transmit power design for {UAV} covert communications,'' \emph{{IEEE} Trans. Veh. Technol.}, vol.~70, no.~12, pp. 13\,424--13\,429, Dec. 2021.

\bibitem{yan2021optimal}
S.~Yan, S.~V. Hanly, and I.~B. Collings, ``Optimal transmit power and flying location for {UAV} covert wireless communications,'' \emph{IEEE J. Sel. Areas Commun.}, vol.~39, no.~11, pp. 3321--3333, Nov. 2021.

\bibitem{Zhou_TrajPower}
X.~Zhou, S.~Yan, J.~Hu, J.~Sun, J.~Li, and F.~Shu, ``Joint optimization of a {UAV}'s trajectory and transmit power for covert communications,'' \emph{IEEE Trans. Signal Process.}, vol.~67, no.~16, pp. 4276--4290, Aug. 2019.

\bibitem{hu2019optimal}
J.~Hu, Y.~Wu, R.~Chen, F.~Shu, and J.~Wang, ``Optimal detection of {UAV}'s transmission with beam sweeping in covert wireless networks,'' \emph{{IEEE} Trans. Veh. Technol.}, vol.~69, no.~1, pp. 1080--1085, Jan. 2019.

\bibitem{zhang2020optimized}
J.~Zhang, X.~Chen, M.~Li, and M.~Zhao, ``Optimized throughput in covert millimeter-wave {UAV} communications with beam sweeping,'' \emph{IEEE Wirel. Commun. Lett.}, vol.~10, no.~4, pp. 720--724, Apr. 2020.

\bibitem{jiang2021resource}
X.~Jiang, Z.~Yang, N.~Zhao, Y.~Chen, Z.~Ding, and X.~Wang, ``Resource allocation and trajectory optimization for {UAV}-enabled multi-user covert communications,'' \emph{{IEEE} Trans. Veh. Technol.}, vol.~70, no.~2, pp. 1989--1994, Feb. 2021.

\bibitem{Su_NOMA_downlink_uav_covert}
Y.~Su, S.~Fu, J.~Si, C.~Xiang, N.~Zhang, and X.~Li, ``Optimal hovering height and power allocation for {UAV}-aided {NOMA} covert communication system,'' \emph{IEEE Wirel. Commun. Lett.}, vol.~12, no.~6, pp. 937--941, Jun. 2023.

\bibitem{li2021md}
Z.~Li, X.~Liao, J.~Shi, L.~Li, and P.~Xiao, ``{MD}-{GAN}-based {UAV} trajectory and power optimization for cognitive covert communications,'' \emph{IEEE Internet Things J.}, vol.~9, no.~12, pp. 10\,187--10\,199, Jun. 2021.

\bibitem{zhou2021uav}
X.~Zhou, S.~Yan, F.~Shu, R.~Chen, and J.~Li, ``{UAV}-enabled covert wireless data collection,'' \emph{IEEE J. Sel. Areas Commun.}, vol.~39, no.~11, pp. 3348--3362, Nov. 2021.

\bibitem{zhang2022uav}
R.~Zhang, X.~Chen, M.~Liu, N.~Zhao, X.~Wang, and A.~Nallanathan, ``{UAV} relay assisted cooperative jamming for covert communications over {Rician} fading,'' \emph{{IEEE} Trans. Veh. Technol.}, vol.~71, no.~7, pp. 7936--7941, Apr. 2022.

\bibitem{chen2021uav_tvt}
X.~Chen, N.~Zhang, J.~Tang, M.~Liu, N.~Zhao, and D.~Niyato, ``{UAV}-aided covert communication with a multi-antenna jammer,'' \emph{{IEEE} Trans. Veh. Technol.}, vol.~70, no.~11, pp. 11\,619--11\,631, Nov. 2021.

\bibitem{chen2021uav_tcom}
X.~Chen, M.~Sheng, N.~Zhao, W.~Xu, and D.~Niyato, ``{UAV}-relayed covert communication towards a flying warden,'' \emph{{IEEE} Trans. Commun.}, vol.~69, no.~11, pp. 7659--7672, Nov. 2021.

\bibitem{wang2019secrecy}
H.-M. Wang, Y.~Zhang, X.~Zhang, and Z.~Li, ``Secrecy and covert communications against {UAV} surveillance via multi-hop networks,'' \emph{{IEEE} Trans. Commun.}, vol.~68, no.~1, pp. 389--401, Jan. 2019.

\bibitem{forouzesh2020joint}
M.~Forouzesh, P.~Azmi, A.~Kuhestani, and P.~L. Yeoh, ``Joint information-theoretic secrecy and covert communication in the presence of an untrusted user and warden,'' \emph{IEEE Internet Things J.}, vol.~8, no.~9, pp. 7170--7181, May 2020.

\bibitem{liang2008multiple}
Y.~Liang and H.~V. Poor, ``Multiple-access channels with confidential messages,'' \emph{{IEEE} Trans. Inf. Theory}, vol.~54, no.~3, pp. 976--1002, Mar. 2008.

\bibitem{ma2021robust}
S.~Ma, Y.~Zhang, H.~Li, S.~Lu, N.~Al-Dhahir, S.~Zhang, and S.~Li, ``Robust beamforming design for covert communications,'' \emph{{IEEE} Trans. Inf. Forensics Secur.}, vol.~16, pp. 3026--3038, 2021.

\bibitem{zhang2024earthgpt}
W.~Zhang, M.~Cai, T.~Zhang, Y.~Zhuang, and X.~Mao, ``Earth{GPT}: A universal multi-modal large language model for multi-sensor image comprehension in remote sensing domain,'' \emph{IEEE Trans. Geosci. Remote Sens.}, vol.~62, pp. 1--20, Jun. 2024.

\bibitem{zhang2024earthmarker}
W.~Zhang, M.~Cai, T.~Zhang, Y.~Zhuang, J.~Li, and X.~Mao, ``{EarthMarker}: A visual prompting multi-modal large language model for remote sensing,'' \emph{IEEE Trans. Geosci. Remote Sens.}, 2024, early access.

\bibitem{Chen_UAV_backscatter_secure}
L.~Bai, Q.~Chen, T.~Bai, and J.~Wang, ``{UAV}-enabled secure multiuser backscatter communications with planar array,'' \emph{IEEE J. Sel. Areas Commun.}, vol.~40, no.~10, pp. 2946--2961, Oct. 2022.

\bibitem{WJX_URLLC}
J.~Wang, L.~Bai, Z.~Fang, R.~Han, J.~Wang, and J.~Choi, ``Age of information based {URLLC} transmission for {UAV}s on pylon turn,'' \emph{{IEEE} Trans. Veh. Technol.}, vol.~73, no.~6, pp. 8797--8809, Jun. 2024.

\bibitem{he2024two}
Y.~He, J.~Xu, and L.~Zhou, ``Two-user collaborative secret and covert uplink communications in {UAV} systems,'' in \emph{2024 IEEE/CIC International Conference on Communications in China (ICCC)}.\hskip 1em plus 0.5em minus 0.4em\relax IEEE, 2024, pp. 903--908.

\bibitem{Tao_CovertNOMA}
L.~Tao, W.~Yang, S.~Yan, D.~Wu, X.~Guan, and D.~Chen, ``Covert communication in downlink {NOMA} systems with random transmit power,'' \emph{IEEE Wirel. Commun. Lett.}, vol.~9, no.~11, pp. 2000--2004, Nov. 2020.

\bibitem{Hu_Inverfad}
J.~Hu, S.~Yan, X.~Zhou, F.~Shu, and J.~Li, ``Covert wireless communications with channel inversion power control in {Rayleigh} fading,'' \emph{{IEEE} Trans. Veh. Technol.}, vol.~68, no.~12, pp. 12\,135--12\,149, Dec. 2019.

\bibitem{marler2010weighted}
R.~T. Marler and J.~S. Arora, ``The weighted sum method for multi-objective optimization: new insights,'' \emph{Struct Multidiscipl Optim}, vol.~41, pp. 853--862, 2010.

\bibitem{park2012outage}
J.~Park, Y.~Sung, D.~Kim, and H.~V. Poor, ``Outage probability and outage-based robust beamforming for {MIMO} interference channels with imperfect channel state information,'' \emph{{IEEE} Trans. Wireless Commun.}, vol.~11, no.~10, pp. 3561--3573, Oct. 2012.

\bibitem{yan2018delay}
S.~Yan, B.~He, X.~Zhou, Y.~Cong, and A.~L. Swindlehurst, ``Delay-intolerant covert communications with either fixed or random transmit power,'' \emph{{IEEE} Trans. Inf. Forensics Secur.}, vol.~14, no.~1, pp. 129--140, Jan. 2018.

\bibitem{luo2010semidefinite}
Z.-Q. Luo, W.-K. Ma, A.~M.-C. So, Y.~Ye, and S.~Zhang, ``Semidefinite relaxation of quadratic optimization problems,'' \emph{IEEE Signal Process Mag.}, vol.~27, no.~3, pp. 20--34, May 2010.

\bibitem{cvx}
M.~Grant and S.~Boyd, ``{CVX}: Matlab software for disciplined convex programming, version 2.1,'' \url{https://cvxr.com/cvx}, Mar. 2014.

\bibitem{bai2024efficient}
L.~Bai, J.~Xu, J.~Wang, R.~Han, and J.~Choi, ``Efficient hybrid transmission for cell-free systems via {NOMA} and multiuser diversity,'' \emph{IEEE Transactions on Mobile Computing}, 2024, early access.

\bibitem{xu2022covert}
J.~Xu, L.~Bai, L.~Zhou, D.~Liu, J.~Wang, and Y.~Shi, ``Covert downlink mmwave communication for massive {MIMO LEO} satellite,'' in \emph{China Conference on Command and Control}.\hskip 1em plus 0.5em minus 0.4em\relax Springer, 2022, pp. 653--664.

\bibitem{hua20203d}
M.~Hua, L.~Yang, Q.~Wu, and A.~L. Swindlehurst, ``{3D} {UAV} trajectory and communication design for simultaneous uplink and downlink transmission,'' \emph{{IEEE} Trans. Commun.}, vol.~68, no.~9, pp. 5908--5923, Sep. 2020.

\bibitem{shahzad2019covert}
K.~Shahzad, X.~Zhou, and S.~Yan, ``Covert wireless communication in presence of a multi-antenna adversary and delay constraints,'' \emph{{IEEE} Trans. Veh. Technol.}, vol.~68, no.~12, pp. 12\,432--12\,436, Dec, 2019.

\end{thebibliography}
\end{document}